\newenvironment{notheorembrackets}{}{}
\newcommand{\hackynewpage}{%
}
\tikzset{shiftarr/.style={
        rounded corners,%
        to path={--([#1]\tikztostart.center)
                     -- ([#1]\tikztotarget.center) \tikztonodes
                     -- (\tikztotarget)},
}}
\newsavebox{\mypullbackcorner}%
\sbox{\mypullbackcorner}{%
\begin{tikzpicture}
    \draw[-] (0,0) -- (.5em,.5em) -- (0,1em);
\end{tikzpicture}%
}
\newcommand{\pullbackangle}[2][]{\arrow[phantom,to path={
                     -- ($ (\tikztostart)!1cm!#2:([xshift=8cm]\tikztostart) $)
                        node[anchor=west,pos=0.0,rotate=#2,
                        inner xsep = 0]
                        {\begin{tikzpicture}[minimum
                        height=1mm,baseline=0,#1]
    \draw[-] (0,0) -- (.5em,.5em) -- (0,1em);
                        \end{tikzpicture}}}]{}}
\definecolor{coalgstate}{HTML}{2e3436}
\definecolor{coalgedge}{HTML}{4F677E}
\definecolor{coalgframe}{HTML}{B9CDE0}
\tikzset{
  coalgebra/.style={
    block line/.style={
      draw=black!50,
      line width=1.2pt,
    },
    block/.style={
      block line,
      rounded corners=3pt,
      inner sep=1pt,
      minimum height=6mm,
      minimum width=6mm,
    },
    scissors line/.style={
      draw=black!50,
      text=black!50,
      font=\footnotesize,
      line width=0.8pt,
      shorten <= -4pt,
      shorten >= -4pt,
      dotted,
    },
    state/.style={
      text depth=0pt,
      outer sep=0pt,
      inner sep=4pt,
      text=coalgstate,
    },
    add state borders/.style={
      state/.append style={
        draw=coalgstate,
        shape=circle,
        outer sep=3pt,
        line width=.4pt,
        inner sep=2.4pt, %
      },
    },
    final/.style={
      draw=coalgstate,
      double=white,
      inner sep=2pt,
      double distance=1pt,
      line width=.4pt,
    },
    transition/.style={
      -{latex},
      line width=0.8pt,
      draw=coalgedge,
      preaction = {draw,-,draw=white,line width=4.6pt,line cap=round},
      every node/.append style={
        text=coalgedge,
      },
    },
    path with edges/.style={
      every edge/.append style={transition}
    },
  },
  coalgebra frame/.style={
    draw=coalgframe,
    rounded corners=4pt,
    line width=1pt,
  },
  coalgname/.style={
    outer sep=5pt,
    inner sep=2pt,
    rounded corners=1pt,
    fill=coalgframe,
    text=black,
  },
}
\newcommand{\C}{\ensuremath{\mathcal{C}}}
\newcommand{\E}{\ensuremath{\mathcal{E}}}
\newcommand{\R}{\ensuremath{\mathbb{R}}}
\newcommand{\N}{\ensuremath{\mathbb{N}}}
\newcommand{\K}{\mathcal{K}}%
\newcommand{\D}{\ensuremath{\mathcal{D}}}
\newcommand{\CO}{\ensuremath{\mathcal{O}}}
\newcommand{\M}{\ensuremath{\mathcal{M}}}
\newcommand{\Set}{\ensuremath{\mathsf{Set}\xspace}}
\newcommand{\set}[2][]{\ensuremath{{#1\{#2#1\}}}}
\newcommand{\EM}{\ensuremath{\mathsf{EM}}}
\newcommand{\Pow}{\ensuremath{\mathcal{P}}}
\newcommand{\Bag}{\ensuremath{\mathcal{B}}}
\newcommand{\colim}{\ensuremath{\operatorname{colim}}}
\newcommand{\dual}{{\ensuremath{\operatorname{\mathsf{op}}}}}
\newcommand{\Coalg}{\ensuremath{\mathsf{Coalg}}}
\newcommand{\Alg}{\ensuremath{\mathsf{Alg}}}
\newcommand{\StrongEpi}{\ensuremath{\mathsf{StrongEpi}}}
\newcommand{\StrongMono}{\ensuremath{\mathsf{StrongMono}}}
\newcommand{\Mor}{\ensuremath{\mathsf{Mor}}}
\newcommand{\Iso}{\ensuremath{\mathsf{Iso}}}
\newcommand{\Epi}{\ensuremath{\mathsf{Epi}}}
\newcommand{\Mono}{\ensuremath{\mathsf{Mono}}}
\newcommand{\reach}{\ensuremath{\mathsf{reach}}}
\newcommand{\pr}{\ensuremath{\mathsf{pr}}}
\newcommand{\inj}{\ensuremath{\mathsf{inj}}}
\newcommand{\id}{\ensuremath{\mathsf{id}}}
\newcommand{\Id}{\ensuremath{\mathsf{Id}}}
\renewcommand{\Im}{\ensuremath{\mathsf{Im}}}
\newcommand{\monoto}{\ensuremath{\rightarrowtail}}
\newcommand{\hookto}{\ensuremath{\hookrightarrow}}
\newcommand{\epito}{\ensuremath{\twoheadrightarrow}}
\newcommand{\takeout}[1]{\relax}
\newcommand{\itemref}[2]{\autoref{#1}.\ref{#2}}
\newcommand{\textqt}[1]{`#1'}
\newcommand{\singlequote}[1]{`#1'}
\newcommand{\etal}{\text{et al.}}
\newcommand{\Adamek}{Ad{\'{a}}mek}
\newcommand{\textshaded}[1]{\text{#1}}
\newenvironment{proofappendix}[2][Proof of]{%
\subsubsection*{#1~\autoref{#2}}%
\addcontentsline{toc}{subsection}{#1~\autoref{#2}}%
}{}
\newenvironment{listinenv}{\setlength{\parskip}{0.2em plus 0.1em minus
    0em}\leavevmode}%
{\setlength{\parskip}{0.0em plus 0.6em minus 0.0em}}
\newcommand\mynobreakpar{\par\nobreak\@afterheading} 
\newcommand{\descto}[3][]{\arrow[phantom]{#2}[#1]{\text{\footnotesize{}\begin{tabular}{c}#3\end{tabular}}}}
\theoremstyle{definition}
\newtheorem{assumption}[thm]{Assumption}
\newtheorem{instance}[thm]{Instance}
\newtheorem{definition}[thm]{Definition}
\newtheorem{example}[thm]{Example}
\newtheorem{remark}[thm]{Remark}
\newtheorem*{rem*}{Remark}
\theoremstyle{plain}
\newtheorem{proposition}[thm]{Proposition}
\newtheorem{lemma}[thm]{Lemma}
\newtheorem{corollary}[thm]{Corollary}
\title[Minimality Notions
via Factorization Systems
and Examples]{Minimality Notions
  via Factorization Systems\texorpdfstring{\\}{}
  and Examples}
\author{Thorsten Wißmann\lmcsorcid{0000-0001-8993-6486}}
\address{Radboud University, Nijmegen, The Netherlands}
\urladdr{https://thorsten-wissmann.de} %
\keywords{Coalgebra, Reachability, Observability, Minimization, Factorization System}
\thanks{Supported by the NWO TOP project 612.001.852}%
\begin{document}
\maketitle
\begin{abstract}
  For the minimization of state-based systems (i.e.\ the reduction of the number
  of states while retaining the system's semantics), there are two obvious
  aspects: removing unnecessary states of the system and merging redundant
  states in the system. In the present article, we relate the two minimization aspects on
  coalgebras by defining an abstract notion of minimality.

  The abstract notions minimality and minimization live in a general category
  with a factorization system. We will find criteria on the category that ensure
  uniqueness, existence, and functoriality of the minimization aspects. The
  proofs of these results instantiate to those for reachability and observability minimization
  in the standard coalgebra literature. Finally, we will see how the two aspects
  of minimization interact and under which criteria they can be sequenced in any
  order, like in automata minimization.

  This is an updated version that fixes a mistake in \autoref{figTSWell},
  spotted by Bálint Kocsis.
\end{abstract}

\section{Introduction}
Minimization is a standard task in computer science that comes in different
aspects and lead to various algorithmic challenges. The task is to reduce the
size of a given system while retaining its semantics, and in general there are
two aspects of making the system smaller: 1.~merge redundant parts of the system
that exhibit the same behaviour (\emph{observability}) and 2.~omit
unnecessary parts (\emph{reachability}). Hopcroft's automata minimization
algorithm~\cite{Hopcroft71} is an early example: in a given deterministic
automaton, 1.~states accepting the same language are identified and
2.~unreachable states are removed. Moreover, Hopcroft's algorithm runs in
quasilinear time; for an automaton with $n$ states, reachability is computed 
in $\CO(n)$ and observability in $\CO(n\log n)$.

Since the reachability is a simple depth-first search, it is straightforward to
apply it to other system types. On the other hand, it took decades until
quasilinear minimization algorithms for observability were developed for other
system types such as transition systems~\cite{PaigeT87}, labelled transition
systems~\cite{DovierPP04,Valmari09}, or Markov
chains~\cite{DerisaviHS03,ValmariF10}. Despite their differences in complexity,
the aspects of observability and reachability have very much in common when
modelling state-based systems as coalgebras. Then, observability is the task to
find the greatest coalgebra quotient and reachability is the task of finding the
smallest subcoalgebra containing the initial state, or generally, a
distinguished point of interest.

In the present article, we define an abstract notion of minimality and
minimization in a category with an $(\E,\M)$-factorization system. Such a
factorization systems gives rise to a generalized notion of quotients and subobjects.
Then, \textqt{minimization} is the task of finding the least quotient resp.~subobject.
To make this general setting applicable to coalgebras, we show that the category of
coalgebras inherits the factorization system from the base category under a mild
assumption -- namely that the functor preserves $\M$.
Dually, if the functor preserves $\E$, a factorization system also lifts to algebras, and even to the
Eilenberg-Moore category. 

Then, we will present different characterizations of minimality
(\autoref{figCharacterizations}) and then study properties of minimizations,
e.g.~under which criteria they exist and are unique, rediscovering the
respective proofs for reachability and observability for coalgebras in the
literature~\cite{amms13,Gumm03}. When combining the two minimization aspects, we
discuss under which criteria reachability and observability can be computed in
arbitrary order.

The goal of the present work is not only to show the connections between
existing minimality notions, but also to provide a series of basic results that
can be used when developing new minimization techniques or even new notions of
minimality.

\paragraph*{Related work}
There is a series of
works~\cite{BidoitHK01,BezhanishviliKP12,BonchiBHPRS14,Rot16} that studies the
minimization of coalgebras by their duality to algebras. In those works, the
correspondence between observability in coalgebras and reachability in algebras
is used. For instance, Rot~\cite{Rot16} relates the final sequence (for
observability in coalgebras) with the initial sequence (for reachability in
algebras). In the present paper however, we consider both observability and
reachability on an abstract level that work for a general factorization system
and discuss their instance in coalgebras.

The present article is an extended version of a conference paper~\cite{Wissmann21}, which
itself was based on Chapter 7 of the author's PhD
dissertation~\cite{Wissmann2020}. In the present version, the overall
presentation is extended with illustrated examples. Also, the proof of
\autoref{monadfactor} is simpler now.

\paragraph*{Structure of the paper}
First, preliminary definitions for (co)algebras and factorization systems are
recalled (\autoref{sec:prelim}). Then, these two notions are
brought together by showing that the factorization system lifts to coalgebras
and Eilenberg\hyp{}Moore algebras under mild assumptions
(\autoref{secCoalgFact}). Thus, we can define categorical notions of minimality and
minimization using only factorization systems, which then also apply to
coalgebras (\autoref{secMinObj}), yielding minimality notions of reachability
and observability. We finally investigate their interplay, if coalgebras are
minimized under both minimality notions (\autoref{secInterplay}).

All results that are part of the present work are proven in the main text. For a
couple of well-known standard results, we recall the proofs in the appendix for
the convenience of the reader.

\section{Preliminaries}
\label{sec:prelim}
In the following, we assume basic knowledge of category theory~(cf.~standard
textbooks~\cite{joyofcats,awodey2010category}).

Given a diagram $D\colon \D\to \C$ (i.e.~a functor $D$ from a small category
$\D$), we denote its limit by $\lim D$ and colimit by $\colim D$ -- if they
exist. The limit projections, resp.~colimit injections, are denoted by
\[
  \pr_i\colon \lim D\to Di
  \qquad
  \inj_{i}\colon Di\to \colim D
  \qquad
  \text{for }i\in \D.
\]

\subsection{Coalgebra}

We model state-based systems as
coalgebras for an endofunctor $F\colon \C\to \C$ on a category $\C$:

\medskip
\noindent\begin{minipage}[c]{.8\textwidth}
\begin{definition}
  An \emph{$F$-coalgebra} (for an endofunctor $F\colon \C\to \C$) is a pair
  $(C,c)$ consisting of an object $C$ (of $\C$) and a morphism $c\colon C\to FC$
  (in $\C$). An $F$-coalgebra morphism $h\colon (C,c)\to (D,d)$ between
  $F$-coalgebras $(C,c)$ and $(D,d)$ is a morphism $h\colon C\to D$ with $d\cdot
  h = Fh\cdot c$.
\end{definition}
\end{minipage}
\hfill
\begin{tikzcd}[baseline=-2mm]
  |[alias=C]|
  C \arrow{d}{h}\arrow{r}{c}& FC \arrow{d}[overlay]{Fh}\\
  D \arrow{r}{d} & FD
\end{tikzcd}

\medskip
Intuitively, the \emph{carrier} $C$ of a coalgebra $(C,c)$ is the state space
and the morphism $c\colon C\to FC$ sends states to
their possible next states. The functor of choice $F$ defines how these possible
next states $FC$ are structured. Before discussing the role of the $F$-coalgebra
morphisms, let us list what $F$-coalgebras are for standard examples of functors $F$:

\newcommand{\drawDfaC}{%
      \node[state,final] (q) at (0,1) {$q$};
      \node[state,final] (p) at (1,1) {$p$};
      \node[state] (r) at (1,0) {$r$};
      \node[state,final] (s) at (0,0) {$s$};
      \begin{scope}[on background layer]
        \draw[transition,bend left] (q) to node[above] {$a$} (p);
        \draw[transition,bend right] (q) to node[outer sep=0pt,pos=0.3,above right] {$b$} (r);
        \draw[transition,bend right] (p) to node[outer sep=0pt,pos=0.3,left] {$b$} (r);
        \draw[transition,loop right,out=30,in=70,looseness=4]
        (p) to node[right] {$a$}(p);
        \draw[transition,bend left] (s) to node[outer sep=0pt,pos=0.3,left] {$a$} (q);
        \draw[transition,bend right] (s) to node[outer sep=0pt,pos=0.3,below] {$b$} (r);
        \draw[transition,bend right] (r) to node[outer sep=0pt,pos=0.5,right] {$a$} (p);
        \draw[transition,loop right,out=-70,in=-30,looseness=4]
        (r) to node[right] {$b$}(r);
      \end{scope}
}

\newcommand{\drawMarkovExLoop}{%
  \node[state] (q) at (0,1) {$q$};
  \node[state] (p) at (1,1) {$p$};
  \node[state] (r) at (1,0) {$r$};
  \node[state] (s) at (0,0) {$s$};
  \draw[transition,bend left] (q) to node[above] {-2} (p);
  \draw[transition,bend left] (p) to node[below,pos=0.3] {2} (q);
  \draw[transition,bend left] (q) to node[right,pos=0.7] {3} (s);
  \draw[transition,bend left] (s) to node[left] {5} (q);
  \draw[transition,bend left] (p) to node[right] {3} (r);
  \draw[transition,bend left] (r) to node[above] {1} (s);
}

\begin{figure}[b]
  \tikzset{
    figure coalgebra frame/.style={
      coalgebra frame,
      minimum width=3.4cm,
      minimum height=3cm,
    },
  }
  \begin{subfigure}{.33\textwidth} \centering
    \begin{tikzpicture}[coalgfit/.style={
        figure coalgebra frame,
        inner xsep=5mm,
        inner ysep=4mm,
        xshift=0mm,
        yshift=0mm,
      },
      x=13mm,y=13mm,
      ]
      \begin{scope}[coalgebra,shift={(0,0)},add state borders]
        \drawDfaC
        \node[coalgfit,fit={(s) (q) (p)}] (C) {};
      \end{scope}
    \end{tikzpicture}
    \hspace*{5mm} %
  \end{subfigure}
  \begin{subfigure}{.30\textwidth} \centering
    \begin{tikzpicture}[coalgebra, add state borders,x=14mm,y=14mm]
      \node[state] (q0) at (0,1) {$q$};
      \node[state] (q1) at (1.0,1.0) {$p$};
      \node[state] (q2) at (1.0,0.0) {$r$};
      \node[state] (s) at (0,0.0) {$s$};
      \draw[transition] (q0) to (q1);
      \draw[transition] (q0) to (s);
      \draw[transition] (q0) to (q2);
      \draw[transition] (q1) to (q2);
      \draw[transition] (s) to (q2);
      \draw[transition,loop right] (q1) to node[alias=loopnode]{} (q1);
      \node[figure coalgebra frame,fit={([xshift=-3mm]q0.west) (q1) (q2) (loopnode)}] {};
    \end{tikzpicture}
    \hspace*{5mm} %
  \end{subfigure}
  \begin{subfigure}{.30\textwidth} \centering
    \begin{tikzpicture}[coalgebra, add state borders, x=14mm, y=14mm]
      \drawMarkovExLoop
      \node[figure coalgebra frame,inner xsep=5mm,inner ysep=4mm,yshift=2mm,fit={(q) (p) (r) (s)}] {};
    \end{tikzpicture}
    \hspace*{3mm} %
  \end{subfigure}
  \\
  \begin{subfigure}{.30\textwidth} \centering
    \caption{$FX=2\times X^{\{a,b\}}$}
    \label{figExDfa}
  \end{subfigure}
  \begin{hideshowkeys}
  \begin{subfigure}{.30\textwidth} \centering
    \caption{$FX=\Pow X$}
    \label{figExTs}
  \end{subfigure}
  \begin{subfigure}{.30\textwidth} \centering
    \caption{$FX=(\R,+,0)^{(X)}$}
    \label{figExMarkov}
  \end{subfigure}
  \end{hideshowkeys}

  \caption{Examples of $F$-coalgebras for different $\Set$-functors $F$}
\end{figure}
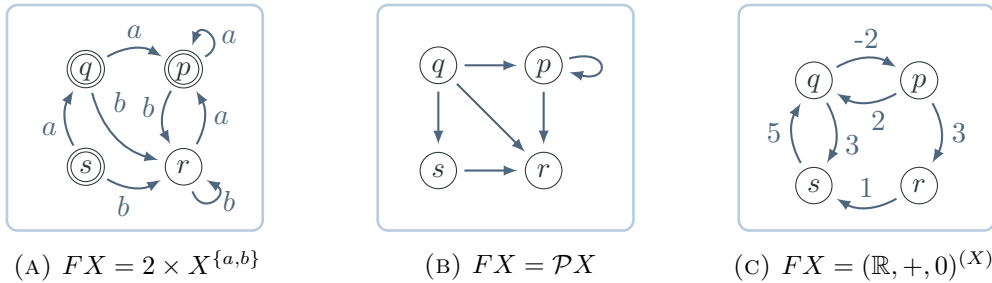

\begin{example}
  Many well-known system-types can be phrased as coalgebras:
  \begin{enumerate}
  \item Deterministic automata (without an explicit initial state) are
    coalgebras for the \Set-functor $FX = 2\times X^A$, where $A$ is the set of
    input symbols. In an $F$-coalgebra $(C,c)$, the first component of $c(x)$
    denotes the finality of the state $x\in C$ and the second component is the
    transition function $A\to C$ of the automaton.
    An example DFA for $A=\{a,b\}$ is shown in \autoref{figExDfa}, where the
    coalgebra map $c\colon C\to 2\times C^{\{a,b\}}$ is defined by:
    \[
      \begin{array}{l@{\qquad}l}
      c(q) = (1,(a\mapsto p,~b\mapsto r))
        &
      c(p) = (1,(a\mapsto p,~b\mapsto r))
        \\
      c(s) = (1,(a\mapsto q,~b\mapsto r))
        &
      c(r) = (0,(a\mapsto p,~b\mapsto r))
      \end{array}
    \]
    In general, the carrier of a coalgebra is not required to be finite; the
    carrier $C$ may be an arbitrary set.

  \item Labelled transition systems are coalgebras for the \Set-functor
    $FX=\Pow(A\times X)$.
    Coalgebras for the powerset functor $FX=\Pow X$ are transition systems
    (i.e.~for a singleton label set). An example of a $\Pow$-coalgebra is
    illustrated in \autoref{figExTs}; here, the coalgebra structure $c\colon
    C\to \Pow C$ is defined by:
    \[
      c(q) = \set{p, r, s}
      \qquad
      c(p) = \set{p, r}
      \qquad
      c(s) = \set{r}
      \qquad
      c(r) = \emptyset
    \]
    The successor structures are not
    ordered, so $\set{p,r}$ and $\set{r,p}$ are the same successor structure.
  \item Weighted systems with weights in a commutative monoid $(M,+,0)$ (and
    finite branching) are coalgebras for the \emph{monoid-valued functor}~\cite[Def.~5.1]{GummS01}
    $M^{(-)}\colon \Set\to\Set$ by
    \[
      M^{(X)} = \{\mu \colon X\to M\mid \mu(x) =0\text{ for all but finitely many
      }x\in X\}
    \]
    which sends a map $f\colon X\to Y$ to the map
    \[
    M^{(f)}\colon M^{(X)}\to M^{(Y)}
    \qquad M^{(f)}(\mu)(y) =\sum \{ \mu(x) \mid x\in X, f(x) = y\}.
    \]
    In an $M^{(-)}$-coalgebra $(C,c)$, the transition weight from state $x\in C$
    to $y\in C$ is given by $c(x)(y) \in M$, and a weight of 0 means that there
    is no transition.
    E.g.~one obtains real-valued weighted systems as coalgebras for the functor
    $(\R,+,0)^{(-)}$. \autoref{figExMarkov} illustrates a coalgebra $c\colon
    C\to (\R,+,0)^{(C)}$ that is defined by:
    \[
      \def\myzero{\color{black!50}0}
      \begin{array}{r@{q \mapsto~}r@{,\hspace*{3mm}p \mapsto~}r@{,\hspace*{3mm}r \mapsto~}r@{,\hspace*{3mm}s \mapsto~}r@{}l}
        c(q) = (&\myzero & -2 & \myzero & 3 & ) \\
        c(p) = (&2 & \myzero & 3 &\myzero &) \\
        c(r) = (&\myzero & \myzero & \myzero &1 &) \\
        c(s) = (&5 & \myzero & \myzero &\myzero &) \\
      \end{array}
    \]
  \item The \emph{bag} functor is defined
    by $\Bag X = (\N,+,0)^{(X)}$. Equivalently, $\Bag X$ is the set of finite
    multisets on $X$.
    Its coalgebras can be viewed as weighted systems (i.e.~via the submonoid
    inclusion $\N \subseteq \R$) or as transition systems
    in which there can be more than one transition between two states.
  \item A wide range of probabilistic and weighted systems can be obtained as
    coalgebras for respective distribution functors, see e.g.~Bartels
    \etal~\cite{BartelsSV04}.
  \end{enumerate}
\end{example}

\begin{definition}
  The category of $F$-coalgebras and their morphisms is denoted by $\Coalg(F)$.
\end{definition}

Intuitively, the coalgebra morphisms preserve the behaviour of states:
\begin{definition}
  \label{defBehEq}
  In \Set, two states $x,y\in C$ in an $F$-coalgebra $(C,c)$ are
  \emph{behaviourally equivalent} if there is a coalgebra homomorphism
  $h\colon (C,c)\to (D,d)$ with $h(x) = h(y)$.
\end{definition}

\newcommand{\drawDfaImh}{%
      \node[state,final] (p) at (1,1) {$\bar p$};
      \node[state] (r) at (1,0) {$r$};
      \node[state,final] (s) at (0,0) {$s$};
      \begin{scope}[on background layer]
        \draw[transition,bend right] (p) to node[outer sep=0pt,pos=0.3,left] {$b$} (r);
        \draw[transition,loop right,out=30,in=70,looseness=4]
        (p) to node[right] {$a$}(p);
        \draw[transition,bend left] (s) to node[outer sep=0pt,pos=0.3,left] {$a$} (p);
        \draw[transition,bend right] (s) to node[outer sep=0pt,pos=0.3,below] {$b$} (r);
        \draw[transition,bend right] (r) to node[outer sep=0pt,pos=0.5,right] {$a$} (p);
        \draw[transition,loop right,out=-70,in=-30,looseness=4]
        (r) to node[right] {$b$}(r);
      \end{scope}
  }
\newcommand{\drawDfaD}{%
  \node[state] (t) at (0,1) {$t$};
  \drawDfaImh
  \begin{scope}[on background layer]
    \draw[transition,bend right] (t) to node[outer sep=0pt,pos=0.3,left] {$a$} (s);
    \draw[transition,bend left] (t) to node[outer sep=0pt,pos=0.5,above] {$b$} (p);
  \end{scope}
  }

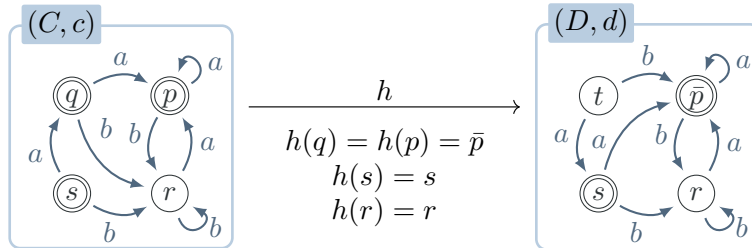
\begin{figure}[b]
    \begin{tikzpicture}[coalgfit/.style={
        coalgebra frame,
        inner xsep=5mm,
        inner ysep=5mm,
        xshift=0mm,
        yshift=1mm,
      },
      x=13mm,y=13mm,
      ]
    \begin{scope}[coalgebra,shift={(0,0)},add state borders]
      \drawDfaC
      \node[coalgfit,fit={(s) (q) (p)}] (C) {};
      \node[coalgname,anchor=west] at (C.north west) {$(C,c)$};
    \end{scope}
    \begin{scope}[coalgebra,shift={(7cm,0)},add state borders]
      \drawDfaD
      \node[coalgfit,fit={(s) (r) (p)}] (D) {};
      \node[coalgname,anchor=west] at (D.north west) {$(D,d)$};
    \end{scope}
    \begin{scope}[path/.style={
        commutative diagrams/.cd, every arrow, every label,
        shorten <= 2mm,shorten >= 2mm,
      }]
      \path[path,yshift=4mm] (C) edge[->]
      node[font=\normalsize] {$h$}
      node[font=\normalsize,align=center,below,yshift=-2mm] {$h(q) = h(p) = \bar p$
        \\ $h(s) = s$
        \\ $h(r) = r$
      }
      (D) ;
    \end{scope}
  \end{tikzpicture}
  \caption{Examples of an $F$-coalgebra morphisms for $FX=2\times X^{\{a,b\}}$}
  \label{figExDfaHom}
\end{figure}

\begin{example}
  For the running examples of $F$, coalgebraic behavioural equivalence
  instantiates to well-known system equivalences:
  \begin{enumerate}
  \item For deterministic automata ($FX=2\times X^A$), the coalgebra morphism square
    means that a coalgebra morphism $h\colon (C,c)\to (D,d)$ has
    to preserve the finality of states and the transition function:
    \[
      q\text{ final ~iff~ }h(q)\text{ final}
      \quad
      \text{and}
      \quad
      \begin{tikzpicture}[coalgebra,baseline=(q.base)]
        \node (q) {\(q\)};
        \node (p) at (1.4,0) {\(q'\)};
        \begin{scope}[on background layer]
          \draw[transition] (q) to node[above] {$a$} (p);
        \end{scope}
      \end{tikzpicture}
      \text{ iff }
      \begin{tikzpicture}[coalgebra,baseline=(q.base)]
        \node (q) {\(h(q)\)};
        \node (p) at (1.6,0) {\(h(q')\)};
        \begin{scope}[on background layer]
          \draw[transition] (q) to node[above] {$a$} (p);
        \end{scope}
      \end{tikzpicture}
      \quad
      \text{for all }q,q'\in C, a\in A.
    \]
    An example of a coalgebra morphism is illustrated in \autoref{figExDfaHom}.
    We use bars in state names to indicate that two states of a
    coalgebra were merged into one state in the codomain.
    Here, the states $q$ and $p$ are identified, showing that they are
    behaviourally equivalent. 
    However, a coalgebra homomorphism does not need to
    identify all states of equivalent behaviour, and indeed $h$ does not
    identify $s$ with $q$ and $p$ even though $s$ has the same behaviour. Also,
    the codomain may have additional states, e.g.~$t\in D$ is not in the
    image of $h$.

    In general, states in a coalgebra for $FX=2\times X^A$ are behaviourally
    equivalent iff they accept the same language~\cite[Example 9.5]{Rutten00}.
    For example, $s,p,q\in C$ in \autoref{figExDfaHom} accept all words in
    $\{a,b\}^*$ that do not end in $b$.

    The argument for the correspondence between behavioural equivalence and
    language equivalence is roughly as follows.
    For sufficiency, if two states $x,y$ in an $F$-coalgebra are
    identified by a coalgebra homomorphism, then one can show by induction over
    input words $w\in A^*$ that either both states or neither of them accepts
    $w$. For necessity, consider the map
    \[
      g\colon C\to \Pow(A^*)
      \qquad
      g(q) = \{w\in A^*\mid q\xrightarrow{w}q'\text{ and $q'$ final}\}
    \]
    which sends states to their semantics.
    This map $g$ is a coalgebra homomorphism for the $F$-coalgebra structure
    $p\colon \Pow(A^*)\to 2\times \Pow(A^*)^A$ given by
    \[
      \pr_1(p(L)) =\begin{cases}
        1 &\text{if }\varepsilon\in L \\
        0 &\text{if }\varepsilon\notin L
      \end{cases}
      \qquad
      \pr_2(p(L))(a) = \{w\mid a\,w\in L\}.
    \]
    In fact, $(\Pow(A^*), p)$ is the \emph{final} $F$-coalgebra. Final
    coalgebras give rise to a coinduction principle and are related to the
    minimality of coalgebras, but they are not needed for the present article.
    Thus we refer to the standard coalgebraic literature~\cite{Rutten00,JacobsR97,Adamek05,Jacobs17} for
    further details on final coalgebras and coinduction.
  \item For labelled transition systems ($FX=\Pow(A\times X)$), states are
    behaviourally equivalent iff they are bisimilar~\cite{aczelmendler:89}.
  \item 
    For weighted systems, i.e.~coalgebras for $M^{(-)}$, the coalgebraic
    behavioural equivalence captures weighted bisimilarity~\cite{KlinS13}.

    Explicitly, a map $h\colon C\to D$ is a coalgebra morphism $h\colon
    (C,c)\to (D,d)$, iff
    \[
      d(h(q))(p)
      = \sum\set[\big]{c(q)(q')\mid q'\in C, h(q') = p}
      \qquad\text{for all }q\in C, p\in D.
    \]
    So, whenever two successor states are merged, then their transition weights
    are summed up (\autoref{figExMarkovHomSimple}).
    \begin{figure}[t]
      \begin{tikzpicture}
        \begin{scope}[coalgebra, add state borders, x=14mm, y=14mm]
          \node[state] (q) {$q$};
          \node[state] (p) at (1.8,0.3) {$p$};
          \node[state] (r) at (1.8,-0.3) {$r$};
          \draw[transition] (q) to node[sloped,above] {$m_1$} (p);
          \draw[transition] (q) to node[sloped,below] {$m_2$} (r);
          \node[coalgebra frame,inner xsep=5mm,inner ysep=0mm,
          minimum height=8mm,
          yshift=0mm,fit={(q) (p) (r)}] (C) {};
          \node[coalgname,anchor=west] at (C.north west) {$(C,c)$};
        \end{scope}
        \begin{scope}[coalgebra, shift={(80mm,0mm)}, add state borders, x=14mm, y=14mm]
          \node[state] (q) {$q$};
          \node[state] (r) at (1.8,0) {$\bar p$};
          \draw[transition] (q) to node[sloped,above] {$m_1+m_2$} (r);
          \node[coalgebra frame,inner xsep=5mm,inner ysep=0mm,
          minimum height=15mm,
          fit={(q) (r)}] (D) {};
          \node[coalgname,anchor=west] at (D.north west) {$(D,d)$};
        \end{scope}
        \begin{scope}[path/.style={
            commutative diagrams/.cd, every arrow, every label,
            shorten <= 2mm,shorten >= 2mm,
          }]
          \path[path,yshift=4mm] (C) edge[->]
          node[font=\normalsize] {$h$}
          node[font=\normalsize,align=center,below,yshift=-2mm]
          {$h(q) = q$
            \\ $h(p) = h(r) = \bar p$
          }
          (C -| D.west) ;
        \end{scope}
      \end{tikzpicture}
      \caption{Simple example of an $M^{(-)}$-coalgebra mormphism}
      \label{figExMarkovHomSimple}
    \end{figure}
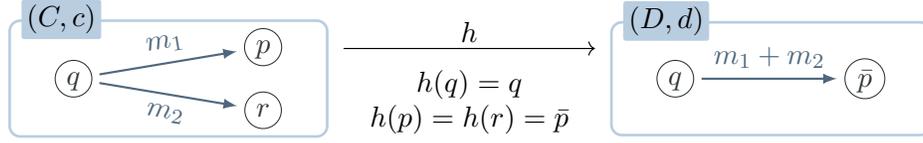
    \begin{figure}[b]
      \begin{tikzpicture}
        \begin{scope}[coalgebra, add state borders, x=14mm, y=14mm, bend angle=15]
          \drawMarkovExLoop
          \node[coalgebra frame,inner xsep=5mm,inner ysep=2mm,
          yshift=1mm,fit={(q) (p) (r) (s)}] (C) {};
          \node[coalgname,anchor=west] at (C.north west) {$(C,c)$};
        \end{scope}
        \begin{scope}[coalgebra, shift={(80mm,7mm)}, add state borders, x=14mm, y=14mm]
          \node[state] (q) {$\bar q$};
          \node[state] (s) at (1,0) {$\bar s$};
          \draw[transition,bend left] (q) to node[above] {1} (s);
          \draw[transition,bend left] (s) to node[below] {5} (q);
          \node[coalgebra frame,inner xsep=5mm,inner ysep=5mm,
          fit={(q) (s)}] (D) {};
          \node[coalgname,anchor=west] at (D.north west) {$(D,d)$};
        \end{scope}
        \begin{scope}[path/.style={
            commutative diagrams/.cd, every arrow, every label,
            shorten <= 2mm,shorten >= 2mm,
          }]
          \path[path,yshift=4mm] (C) edge[->]
          node[font=\normalsize] {$h$}
          node[font=\normalsize,align=center,below,yshift=-2mm]
          {$h(q) = h(r) = \bar q$
            \\ $h(p) = h(s) = \bar s$
          }
          (C -| D.west) ;
        \end{scope}
      \end{tikzpicture}
      \caption{Example of a $(\R,+,0)^{(-)}$-coalgebra morphism}
      \label{figExMarkovMor}
    \end{figure}
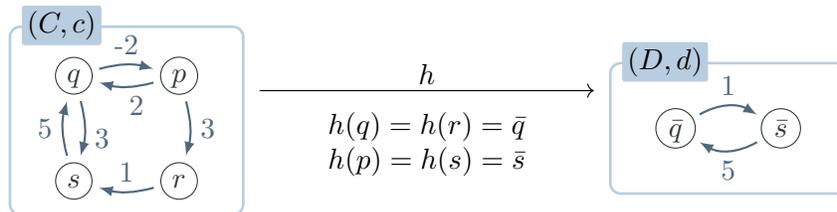

    For $M=(\R,+,0)$, an example of a $\R^{(-)}$-coalgebra morphism $h\colon
    (C,c)\to (D,d)$ is given in \autoref{figExMarkovMor}.  The map $h$ is a
    coalgebra morphism, because all transition weights in $(D,d)$
    are the sum of the corresponding transition weights~in~$(C,c)$:
    \[
      \begin{array}{c}
      \overbrace{d(\bar q)(\bar s)}^{1}
      = \overbrace{c(q)(p)}^{-2} + \overbrace{c(q)(s)}^{3} 
        \qquad
      \overbrace{d(\bar q)(\bar s)}^{1}
      = \overbrace{c(r)(p)}^{0} + \overbrace{c(r)(s)}^{1} 
      \\
        \underbrace{d(\bar s)(\bar q)}_{5}
        = \underbrace{c(s)(q)}_{5} + \underbrace{c(s)(r)}_{0} 
        \qquad
        \underbrace{d(\bar s)(\bar q)}_{5}
        = \underbrace{c(p)(q)}_{2} + \underbrace{c(p)(r)}_{3} 
      \end{array}
    \]
  \item Further semantic notions can be modelled with coalgebras by changing the
    base category from $\C=\Set$ to the Eilenberg-Moore~\cite{TuriP97} or Kleisli
    category~\cite{HasuoJS06} of a monad, to nominal
    sets~\cite{KurzPSV13,MiliusSW16}, or to partially ordered sets~\cite{BalanK11}.
  \end{enumerate}
\end{example}

The category of coalgebras inherits many properties from the base-category
$\C$. The categories are related via the forgetful functor
\[
  U\colon \Coalg(F)\to \C
  \qquad
  U(C,c) = C
  \qquad
  Uh = h
\]
which sends coalgebras to their carrier and coalgebra morphisms to their underlying morphism.
For instance, it is a standard result that $U$ creates colimits. Its proof is recalled in the
appendix for the convenience of the reader.
\begin{lemma}
  \label{cor:coalg-colims}
  The forgetful functor $U\colon \Coalg(F)\to \C$ creates all colimits. That is, the
  colimit of a diagram $D\colon \D\to \Coalg(F)$ exists, if $U\cdot D\colon
  \D\to \C$ has a colimit, and moreover, there is a unique coalgebra structure on
  $\colim(U\cdot D)$ making it the colimit of $D$ and making the colimit injections of
  $\colim D$ coalgebra morphisms.
\end{lemma}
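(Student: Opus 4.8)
The plan is to construct the required coalgebra structure by a single application of the universal property of the colimit in $\C$, and then to establish the universal property in $\Coalg(F)$ by a second such application. The key point to get right is that both steps test against the colimit $C = \colim(U\cdot D)$ of the underlying diagram, so the argument never needs a colimit in the image of $F$; consequently no assumption on $F$ (such as preservation of colimits) is required.

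Write $Di = (C_i,c_i)$ for $i\in\D$, and let $C = \colim(U\cdot D)$ with colimit injections $\inj_i\colon C_i\to C$ in $\C$. First I would observe that the family $(F\inj_i\cdot c_i\colon C_i\to FC)_{i\in\D}$ is again a cocone on $U\cdot D$: for a morphism $f\colon i\to j$ in $\D$ one computes $F\inj_j\cdot c_j\cdot Df = F\inj_j\cdot F(Df)\cdot c_i = F(\inj_j\cdot Df)\cdot c_i = F\inj_i\cdot c_i$, using the coalgebra-morphism law for $Df$ and the fact that the $\inj_i$ form a cocone on $U\cdot D$. By the universal property of $C$ there is then a unique morphism $c\colon C\to FC$ with $c\cdot\inj_i = F\inj_i\cdot c_i$ for every $i$. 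But this equation is exactly the statement that each $\inj_i$ is an $F$-coalgebra morphism $(C_i,c_i)\to(C,c)$, so the uniqueness of $c$ already yields the uniqueness of the coalgebra structure making all injections coalgebra morphisms.

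It then remains to check that $(C,c)$ together with the $\inj_i$ is the colimit of $D$ in $\Coalg(F)$. Given any cocone $(g_i\colon(C_i,c_i)\to(E,e))_{i\in\D}$ on $D$, the underlying maps form a cocone on $U\cdot D$, so there is a unique $\C$-morphism $h\colon C\to E$ with $h\cdot\inj_i = g_i$; uniqueness of the mediating morphism in $\Coalg(F)$ is thus inherited directly from $\C$. The only thing left to verify is that $h$ is a coalgebra morphism, i.e.\ $e\cdot h = Fh\cdot c$. For this I would show that both sides, precomposed with each $\inj_i$, equal $Fg_i\cdot c_i$: indeed $e\cdot h\cdot\inj_i = e\cdot g_i = Fg_i\cdot c_i$ since $g_i$ is a coalgebra morphism, while $Fh\cdot c\cdot\inj_i = Fh\cdot F\inj_i\cdot c_i = F(h\cdot\inj_i)\cdot c_i = Fg_i\cdot c_i$. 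As $(Fg_i\cdot c_i)_i$ is a cocone on $U\cdot D$ with apex $FE$ (by the same computation as before), the universal property of $C$ forces $e\cdot h = Fh\cdot c$, as required.

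The proof is essentially routine once the cocone $(F\inj_i\cdot c_i)_i$ is identified; there is no real obstacle beyond recognizing that $C$ itself, rather than $FC$, serves as the universal test object in both halves. This is precisely what makes the creation of colimits hold uniformly for every endofunctor $F$.
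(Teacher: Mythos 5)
Your proof is correct and follows essentially the same route as the paper's: both obtain the coalgebra structure as the unique mediating morphism out of $\colim(U\cdot D)$ for the cocone $(F\inj_i\cdot c_i)_i$, and both verify that the mediating map $h$ is a coalgebra morphism by showing that $e\cdot h$ and $Fh\cdot c$ are cocone morphisms into $FE$ for the same cocone (your $Fg_i\cdot c_i$ equals the paper's $e\cdot m_i$ by the homomorphism law). The only cosmetic difference is that you fold the uniqueness of the coalgebra structure into the uniqueness clause of the universal property rather than arguing it as a separate step.
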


On the other hand, we do not necessarily have all limits in $\Coalg(F)$. If $F$
preserves a limit of a diagram $U\cdot D$ for $D\colon \D\to \Coalg(F)$, then
the limit also exists in $\Coalg(F)$.

Coalgebras model systems with a transition structure, and pointed coalgebras
extend this by a notion of initial state:
\begin{definition}
  For an object $I\in \C$, an $I$-\emph{pointed} $F$-coalgebra $(C,c,i_C)$ is an
  $F$-coalgebra $(C,c)$ together with a morphism $i_C\colon I\to C$. A pointed
  coalgebra morphism $h\colon (C,c,i_C)\to (D,d,i_D)$ is a coalgebra morphism
  $h\colon (C,c)\to (D,d)$ that preserves the point: $i_D= h \cdot i_C$.

  The category of $I$-pointed $F$-coalgebras is denoted by $\Coalg_I(F)$.
\end{definition}

\begin{example} \label{exPointedDFA}
  For $I:= 1$ in $\Set$, a pointed coalgebra $(C,c,i_C)$ for $FX=2\times X^A$ is
  a deterministic automaton, where the initial state is given by the map
  $i_C\colon 1\to C$.
\end{example}

The point can also be understood as a (nullary) algebraic operation. In general, coalgebras
are dual to $F$-algebras in the following sense.

\medskip
\noindent\begin{minipage}[c]{.8\textwidth}
\begin{definition}
  An $F$-algebra (for a functor $F\colon \C\to\C$) is a morphism $a\colon FA\to
  A$, an algebra homomorphism $h\colon (A,a)\to (B,b)$ is a morphism $h\colon
  A\to B$ fulfilling $b\cdot Fh = h\cdot a$. The category of $F$-algebras is
  denoted by $\Alg(F)$.
\end{definition}
\end{minipage}
\hfill
\begin{tikzcd}[baseline=-2mm]
  FA
  \arrow{r}{a}
  \arrow{d}{Fh}
  & A
  \arrow{d}{h}
  \\
  FB \arrow{r}{b} & B
\end{tikzcd}
\medskip

The theory of algebras for a functor is dual to coalgebras in the sense that
$\Alg(F) = \Coalg(F^\dual)^\dual$ for $F^\dual\colon
\C^\dual\to\C^\dual$. The $I$-pointed coalgebras thus are also algebras for the
constant $I$ functor.
Most of the results of the present paper also apply to algebras for a functor
$F\colon \C\to\C$.

\subsection{Factorization Systems}

The process of minimizing a system constructs a quotient or subobject of the
state space, where the notions of quotient and subobject respectively stem from
a factorization system in the category of interest. This generalizes the well-known
image factorization of a function into a surjective and an injective part:

\begin{notheorembrackets}
\begin{defiC}[{\cite[Definition 14.1]{joyofcats}}] \label{D:factSystem}
  Given classes of morphisms $\E$ and $\M$ in $\C$, we say that $\C$ has an
  \emph{$(\E,\M)$-factorization system}\index{EM@$(\E,\M)$}\index{factorization system} provided that:
  \begin{enumerate}
  \item $\E$ and $\M$ are closed under composition with isomorphisms.
  \item
    \begin{minipage}[t]{.7\textwidth}
    Every morphism $f\colon A\to B$ in $\C$ has a factorization $f =
    m\cdot e$ with $e\in \E$ and $m\in \M$. We write $\Im(f)$
    for the intermediate object,
    $\twoheadrightarrow$ for morphisms 
    $e\in \E$, and $\monoto$ for morphisms $m\in \M$.
    \end{minipage}
    \hfill
      \begin{tikzcd}[baseline=(A.base),row sep=2mm,column sep=4mm]
        |[alias=A]|
        A
        \arrow[->>,to path={|- (\tikztotarget) \tikztonodes},rounded corners]{dr}[pos=0.2]{e}
        \arrow[]{rr}[alias=f]{f}
        && B
        \\
        & \Im (f)
        \arrow[>->,to path={-| (\tikztotarget) \tikztonodes},rounded corners]{ur}[pos=0.8]{m}
      \end{tikzcd}
      \hfill
      \hspace*{0pt}

  \item\label{diagonalization}
    \begin{minipage}[t]{.7\textwidth}
      For each commutative square $g\cdot e = m\cdot f$ with $m\in \M$ and $e\in \E$,
      there exists a unique diagonal fill-in $d$ with $m\cdot d=g$ and $d\cdot e =
      f$.
    \end{minipage}
    \hfill
    \begin{tikzcd}[baseline=(A.base)]
      |[alias=A]|
      A
      \arrow[->>]{r}{e}
      \arrow{d}[swap]{f}
      & B
      \arrow{d}{g}
      \arrow[dashed]{dl}[description]{\exists !d}
      \\
      C
      \arrow[>->]{r}{m}
      & D
    \end{tikzcd}
    \hfill
    \hspace*{0pt}

  \end{enumerate}
\end{defiC}
\end{notheorembrackets}

\begin{example}
  In \Set, we have an $(\Epi,\Mono)$-factorization system
  where $\Epi$ is the class of surjective maps, and $\Mono$ the class of injective maps.
  The image of a map $f\colon A\to B$ is given by
  \[
    \Im(f) = \{b\in B \mid \text{there exists $a\in A$ with $f(a) = b$}\}.
  \]
  canonically yielding maps $e\colon A\twoheadrightarrow \Im(f)$ and $m\colon \Im(f)\monoto B$.
  Note that one can also regard $\Im(f)$ as a set of equivalence classes of $A$:
  \[
    \Im(f) \cong \big\{ \{a'\in A\mid f(a') = f(a)\} \,\mid a\in A \big\}.
  \]
  Intuitively, the diagonal fill-in property
  (\itemref{D:factSystem}{diagonalization}, also called \emph{diagonal lifting})
  provides a way of defining a map $d$ on equivalence classes (given by the
  surjective map at the top) and with a restricted codomain (given by the
  injective map at the bottom).
\end{example}

\begin{example}
  \label{trivFact}
  In general, the elements of $\E$ are not necessarily epimorphisms and the
  elements of $\M$ are not necessarily monomorphisms. In particular, every
  category has an $(\E,\M)$-factorization system with $\E := \Iso$ being the class
  of isomorphisms and $\M := \Mor$ being the class of all morphisms (and also
  vice-versa).
\end{example}

\begin{definition}
  An $(\E,\M)$-factorization system is called \emph{proper} if $\E\subseteq
  \Epi$ and $\M\subseteq \Mono$.
\end{definition}
These two conditions of properness are independent. In fact, $\M\subseteq \Mono$ is
equivalent to every split-epimorphism being in $\E$~\cite[Prop.~14.11]{joyofcats}.
In the literature, it is often required that the factorization system is proper,
and in fact a proper factorization system arises in complete or cocomplete
categories:

\begin{example} \label{completeCatFactorizations} Every complete category has
  a $(\StrongEpi, \Mono)$-factorization system \cite[Thm.~14.17 and
  14C(d)]{joyofcats} and also an $(\Epi, \StrongMono)$-factorization system
  \cite[Thm.~14.19, and 14C(f)]{joyofcats}. By duality, every cocomplete
  category has so as well.
\end{example}

\begin{remark} \label{rem:EM}
  $(\E,\M)$-factorization systems have many properties known from surjective
  and injective maps on $\Set$~\cite[Chp.~14]{joyofcats}:
  \begin{enumerate}
  \item\label{rem:EM:iso} $\E\cap \M$ is the class of isomorphisms of $\C$.
  \item\label{rem:EM:3} If $f\cdot g\in \M$ and $f\in \M$, then $g\in \M$. 
    If $\M\subseteq \Mono$, then $f\cdot g\in \M$ implies $g\in \M$.
  \item $\E$ and $\M$ are respectively closed under composition.
  \item\label{rem:EM:pullback} $\M$ is stable under pullbacks, $\E$ is stable under pushouts.
  \end{enumerate}
\end{remark}

The stability generalizes as follows to wide pullbacks and pushouts:
\begin{lemma}
  \label{EM:pullbackwide}
  $\M$ is stable under wide pullbacks: for a
    family $(f_i\colon A_i\to B)_{i\in I}$ and its wide pullback $(\pr_i\colon
    P\to A_i)_{i\in I}$, a projection $\pr_j\colon P\to A_j$ is in $\M$ if $f_i$
    is in $\M$ for all $i \in I\setminus \{j\}$.
\end{lemma}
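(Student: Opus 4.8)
The plan is to show directly that the projection $\pr_j\colon P\to A_j$ lies in $\M$ by factoring it and proving that its $\E$-part is an isomorphism. So first I would take an $(\E,\M)$-factorization $\pr_j = m\cdot e$ with $e\colon P\epito Q$ in $\E$ and $m\colon Q\monoto A_j$ in $\M$, and aim to construct a two-sided inverse $t\colon Q\to P$ of $e$. Since $\E$ and $\M$ are closed under composition with isomorphisms (the first axiom of \autoref{D:factSystem}), once $e$ is shown to be an isomorphism we immediately get $\pr_j = m\cdot e\in\M$.

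To build $t$, I would assemble a cone over the diagram $(f_i)_{i\in I}$ with apex $Q$. For the index $j$ I take the leg $m\colon Q\to A_j$ itself. For each $i\in I\setminus\{j\}$, I consider the square with top $e$, left $\pr_i$, bottom $f_i$, and right $f_j\cdot m$; it commutes because the wide-pullback cone equation $f_i\cdot\pr_i = f_j\cdot\pr_j$ together with $\pr_j = m\cdot e$ makes both composites equal to $f_i\cdot\pr_i$. As $e\in\E$ and, by hypothesis, $f_i\in\M$ for every $i\ne j$, the diagonal fill-in (\itemref{D:factSystem}{diagonalization}) supplies a unique $q_i\colon Q\to A_i$ with $f_i\cdot q_i = f_j\cdot m$ and $q_i\cdot e = \pr_i$. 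Setting $q_j := m$, the family $(q_i)_{i\in I}$ satisfies $f_i\cdot q_i = f_j\cdot m$ for all $i$ and hence forms a cone, so the universal property of the wide pullback yields the desired $t\colon Q\to P$ with $\pr_i\cdot t = q_i$ for every $i$.

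It then remains to verify $t\cdot e = \id_P$ and $e\cdot t = \id_Q$. The first is routine: post-composing with each $\pr_i$ gives $\pr_i\cdot t\cdot e = q_i\cdot e = \pr_i$ (using $q_i\cdot e = \pr_i$ for $i\ne j$ and $m\cdot e = \pr_j$ for $i=j$), so $t\cdot e = \id_P$ by the uniqueness part of the universal property. The main obstacle is the second identity: one computes $m\cdot(e\cdot t) = (m\cdot e)\cdot t = \pr_j\cdot t = m$, but in a general (not necessarily proper) factorization system $m$ need not be a monomorphism, so I cannot simply cancel it. To get around this I would invoke diagonalization a second time, on the square with $e$ on top, $e$ on the left, and $m$ on both the bottom and the right; both $\id_Q$ and $e\cdot t$ are diagonal fill-ins of this $(\E,\M)$-square, so uniqueness forces $e\cdot t = \id_Q$. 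Hence $e$ is an isomorphism and $\pr_j\in\M$. Note that the argument never uses $f_j\in\M$, in agreement with the statement, and that if the system happens to be proper the final step collapses to cancelling the monomorphism $m$.
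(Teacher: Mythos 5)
Your proof is correct and follows essentially the same route as the paper's: factor $\pr_j = m\cdot e$, construct the cone legs on the image via diagonal fill-ins into the squares with top $e\in\E$ and bottom $f_i\in\M$, obtain the mediating morphism $t$ into the wide pullback, conclude $t\cdot e=\id_P$ from the uniqueness of cone morphisms, and then prove $e\cdot t=\id_Q$ by uniqueness of the diagonal fill-in of the square with $e$ on two sides and $m$ on the other two -- exactly the paper's device for avoiding the unjustified cancellation of a possibly non-monic $m$. Your closing remark that $f_j\in\M$ is never used likewise matches the paper's statement and proof.
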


A factorization system also provides notions of subobjects and quotients,
generalizing the notions of subset and quotient sets:

\begin{definition}
  \label{D:subobjectEM}
  For a class $\M$ of morphisms, an \emph{$\M$-subobject} of an object $X$ is 
  a pair $(S,s)$ where
  $s\colon S\monoto X$ is in $\M$.
  Two $\M$-subobjects $(s,S)$, $(s',S')$ are called \emph{isomorphic}
  if there is an isomorphism $\phi\colon S\to S'$ with $\phi\cdot s = s'$.
  We write $(s,S)\le (s', S')$ if there is a morphism $h\colon S\to S'$ with $s'\cdot h = s$.
  Dually, an \emph{$\E$-quotient} of $X$ is pair $(Q,q)$ for a
  morphism $q\colon X\epito Q$ ($q\in \E$).
  If $(\E,\M)$ is fixed from the context, we simply speak of
  subobjects and quotients.
\end{definition}
Note that $\le$ is not necessarily anti-symmetric: if $(s,S)\le (s',S')$ and $(s',S')\le (s,S)$
then it is not necessarily the case that $(s,S)$ and $(s',S')$ are isomorphic $\M$-subobjects.
Thus, it is often required that $\M$ is a class of
monomorphisms~\cite[Def.~7.77]{joyofcats}, but many of the results in the
present work hold without this assumption.
If $\M$ is so, then the subobjects (up to iso) of a given object $X$ form a preordered class. Moreover, they form a
preordered set iff $\C$ is $\M$-wellpowered. This is in fact
the definition: $\C$ is $\M$-wellpowered if for each $X\in \C$ there is (up to
isomorphism) only a set of $\M$-subobjects. On $\Set$, the isomorphism classes
of \text{($\Mono$-)}sub\-ob\-jects of $X$ correspond to subsets of $X$ and the
isomorphism classes of ($\Epi$-)quotients of $X$ correspond to partitions of
$X$.

If $(\E,\M)$ forms a factorization system, then its axioms provide us with
methods to construct and work with subobjects and quotients, e.g.~the image
factorization means that for every morphism, we obtain a quotient of its domain
and a subobject of its codomain. The minimization of coalgebras amounts to the
construction of certain subobjects or quotients with
respect to a suitable factorization system in the category of coalgebras $\Coalg(F)$.

\section{Factorization System for Coalgebras}
\label{secCoalgFact}

If we have an $(\E,\M)$-factorization system on the base category $\C$ on which
we consider coalgebras for $F\colon \C\to\C$, then it is natural
to consider coalgebra morphisms whose underlying $\C$-morphism is in $\E$, resp.~$\M$:
\begin{definition}
  Given a class of $\C$-morphisms $\E$, we say that an $F$-coalgebra morphism
  $h\colon (C,c)\to (D,d)$ is \emph{$\E$-carried} if $h\colon C\to D$ is in $\E$.
\end{definition}

This induces the standard notions of subcoalgebra and quotient coalgebras as
instances of $\M$-subobjects and $\E$-quotients in $\Coalg(F)$: an
$\M$-subcoalgebra of $(C,c)$ is an ($\M$-carried)-subobject of $(C,c)$ (in
$\Coalg(F)$), i.e.\ is represented by an $\M$-carried homomorphism $m\colon (S,s)
\monoto (C,c)$.
Likewise, a \emph{quotient}\index{quotient coalgebra} of a coalgebra $(C,c)$ is
an ($\E$-carried)-quotient of $(C,c)$ (in $\Coalg(F)$), i.e.\
is represented by a coalgebra morphism $e\colon (C,c) \epito (Q,q)$ carried by an
$\E$-morphism. If $\E$ happens to be a class of epimorphisms, then $q$ is uniquely determined by $e$ and $c$.

Note that for the case where $\M$ is the class of monomorphisms, the
monomorphisms in $\Coalg(F)$ coincide with the $\Mono$-carried homomorphisms
only under additional assumptions:
\begin{lemma}
  \label{coalgMono}
  If weak kernel pairs exist in $\C$ and are preserved by $F\colon \C\to \C$, then the
  monomorphisms in $\Coalg(F)$ are precisely the $\Mono$-carried coalgebra
  homomorphisms.
\end{lemma}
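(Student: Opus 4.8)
The statement is a biconditional, and one inclusion is elementary while the other is where the hypotheses do their work, so the plan is to treat the two directions separately.

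First I would dispose of the easy inclusion: every $\Mono$-carried homomorphism is a monomorphism in $\Coalg(F)$. This requires no assumption on $F$ and follows purely from the forgetful functor $U\colon \Coalg(F)\to \C$ being faithful, i.e.\ from the fact that faithful functors reflect monomorphisms. Concretely, if $h\colon (C,c)\to (D,d)$ has $Uh$ a monomorphism in $\C$ and $f,g\colon (X,x)\to (C,c)$ are coalgebra morphisms with $h\cdot f = h\cdot g$, then applying $U$ and cancelling the monomorphism $Uh$ gives $Uf = Ug$, and faithfulness of $U$ yields $f = g$.

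For the converse direction -- a monomorphism $h$ in $\Coalg(F)$ is $\Mono$-carried -- I would use the characterization of monomorphisms via weak kernel pairs: a morphism $h$ in any category is a monomorphism as soon as some weak kernel pair $(p_1,p_2)$ of $h$ satisfies $p_1 = p_2$. (The relevant implication is immediate: if $f_1,f_2$ are equalized by $h$, the weak universal property produces $u$ with $p_i\cdot u = f_i$, whence $f_1 = p_1\cdot u = p_2\cdot u = f_2$.) So, given a monomorphism $h\colon (C,c)\to (D,d)$ in $\Coalg(F)$, I would take the weak kernel pair $(p_1,p_2)\colon K\to C$ of the underlying $\C$-morphism $h\colon C\to D$, which exists by assumption, and aim to show $p_1 = p_2$. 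The key step is to lift a coalgebra structure onto $K$ making both projections coalgebra morphisms. Since $F$ preserves weak kernel pairs, $(Fp_1,Fp_2)\colon FK\to FC$ is a weak kernel pair of $Fh\colon FC\to FD$. The two morphisms $c\cdot p_1, c\cdot p_2\colon K\to FC$ are equalized by $Fh$, because $Fh\cdot c\cdot p_i = d\cdot h\cdot p_i$ and $h\cdot p_1 = h\cdot p_2$; hence the weak universal property yields a morphism $k\colon K\to FK$ with $Fp_i\cdot k = c\cdot p_i$ for $i = 1,2$, which are exactly the conditions making $p_1,p_2\colon (K,k)\to (C,c)$ coalgebra morphisms. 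Now $h\cdot p_1$ and $h\cdot p_2$ are coalgebra morphisms $(K,k)\to (D,d)$ with equal underlying maps, hence equal by faithfulness of $U$; since $h$ is a monomorphism in $\Coalg(F)$, we conclude $p_1 = p_2$ in $\Coalg(F)$ and therefore in $\C$. By the weak-kernel-pair characterization, $h$ is then a monomorphism in $\C$, i.e.\ $\Mono$-carried.

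The main obstacle, and the only place where the hypotheses enter, is the lifting of the coalgebra structure to $K$: we need weak kernel pairs to exist in $\C$ in order to have the object $K$ at all, and we need $F$ to preserve them so that $c\cdot p_1$ and $c\cdot p_2$ can be factored through $FK$ to define $k$. Working with weak (rather than strict) kernel pairs is deliberate, since demanding preservation of strict limits would be stronger than necessary; weak universality suffices precisely because the non-uniqueness of $k$ never interferes with the remaining argument.
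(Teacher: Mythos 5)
Your proof is correct and follows essentially the same route as the paper's: lift a coalgebra structure onto the weak kernel pair $K$ of the underlying map using preservation of weak kernel pairs by $F$, conclude $p_1 = p_2$ from monicity of $h$ in $\Coalg(F)$, and then deduce monicity in $\C$ from the weak universal property. Your additional explicit details (faithfulness of $U$ for the easy direction, and the one-line verification that a collapsed weak kernel pair forces monicity) are exactly the steps the paper leaves implicit or states tersely.
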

Preservation of kernel pairs is a commonly known criterion, and Gumm and
Schröder \cite[Example 3.5]{GS05} present an example of a functor that does
not preserve kernel pairs but for which there is a monic coalgebra
homomorphism that is not carried by a monomorphism.
\begin{proofappendix}{coalgMono}
  It is clear that every $\Mono$-carried homomorphism is monic in $\Coalg(F)$.
  Conversely, let $m\colon (C,c)\to (D,d)$ be a monomorphism in $\Coalg(F)$. Let
  $\pr_1,\pr_2\colon K\to C$ be a weak kernel pair of $m$. Since $F$ preserves
  weak kernel pairs, $F\pr_1,F\pr_2\colon FK\to FC$ is a weak kernel pair of
  $Fm\colon FC\to FD$. This induces some cone morphism $k\colon K\to FK$ making
  $\pr_1$ and $\pr_2$ coalgebra morphisms $(K,k)\to (C,c)$:
  \[
    \begin{tikzcd}
      K
      \arrow[shift left=1]{r}{\pr_1}
      \arrow[shift right=1]{r}[swap]{\pr_2}
      \arrow[dashed]{d}{k}
      & C
      \arrow{d}{c}
      \arrow{r}{m}
      & D
      \arrow{d}{d}
      \\
      FK
      \arrow[shift left=1]{r}{F\pr_1}
      \arrow[shift right=1]{r}[swap]{F\pr_2}
      & FC
      \arrow{r}{Fm}
      & FD
    \end{tikzcd}
  \]
  Since $m$ is monic in $\Coalg(F)$, this implies that $\pr_1=\pr_2$. For the verification that $m$ is a monomorphism in $\C$, consider $f,g\colon X\to C$ with $m\cdot f=m\cdot g$. Since $\pr_1,\pr_2$ is a weak kernel pair, it induces some cone morphism $v\colon X\to K$, fulfilling $f= \pr_1\cdot v$ and $g=\pr_2\cdot v$. Since, $\pr_1=\pr_2$, we find $f=g$
  as desired.
  \qed
\end{proofappendix}

For the construction of quotient coalgebras and subcoalgebras, it is handy to
have the factorization system directly in $\Coalg(F)$. It is a standard
result that the image factorization of homomorphisms lifts (see e.g.~\cite[Lemma
2.5]{lffiac}). Under assumptions on $\E$ and $\M$, Kurz shows that the
factorization system lifts to \Coalg(F)~\cite[Theorem 1.3.7]{kurzPhd} (and to
other categories with a forgetful functor to the base category $\C$).

In fact, the factorization system always lifts to $\Coalg(F)$ under the
condition that $F$ preserves $\M$. By this condition, we mean that $m\in \M$
implies $Fm\in \M$.
\begin{lemma}
  \label{coalgfactor}
  If $F\colon \C\to\C$ preserves $\M$, then the $(\E,\M)$-factorization system
  lifts from $\C$ to an ($\E$-carried, $\M$-carried)-factorization system in
  $\Coalg(F)$. The factorization of $F$-coalgebra homomorphisms
  and the diagonal fill-in morphisms in $\Coalg(F)$ are as in $\C$.
\end{lemma}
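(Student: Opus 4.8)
The plan is to check the three axioms of \autoref{D:factSystem} for the classes of $\E$-carried and $\M$-carried homomorphisms on $\Coalg(F)$, in each case reducing to the corresponding axiom of $(\E,\M)$ on $\C$ via the forgetful functor $U$ and exploiting that $F$ preserves $\M$. I would use throughout that a homomorphism $h$ is $\E$-carried (resp.~$\M$-carried) exactly when $Uh\in\E$ (resp.~$Uh\in\M$), and that the isomorphisms of $\Coalg(F)$ are precisely the homomorphisms carried by $\C$-isomorphisms. Closure under composition with isomorphisms (the first axiom) is then immediate, since on carriers it reduces to composing an $\E$- (resp.~$\M$-)morphism with a $\C$-isomorphism, which stays in $\E$ (resp.~$\M$) by the first axiom in $\C$.

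For the factorization axiom, given a homomorphism $h\colon (C,c)\to (D,d)$, I would factor the underlying map as $h = m\cdot e$ in $\C$ with $e\colon C\epito \Im(h)$ in $\E$ and $m\colon \Im(h)\monoto D$ in $\M$, and then equip $\Im(h)$ with a coalgebra structure. This is exactly where preservation of $\M$ is used: as $m\in\M$ we have $Fm\in\M$, so the square with top edge $e$, bottom edge $Fm$, left leg $Fe\cdot c$ and right leg $d\cdot m$ satisfies the hypotheses of \itemref{D:factSystem}{diagonalization}. I would check that it commutes (both composites equal $Fh\cdot c = d\cdot h$) and let $i\colon \Im(h)\to F\,\Im(h)$ be the unique diagonal. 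Its defining equations $i\cdot e = Fe\cdot c$ and $Fm\cdot i = d\cdot m$ state precisely that $e$ and $m$ are coalgebra homomorphisms $(C,c)\to(\Im(h),i)$ and $(\Im(h),i)\to(D,d)$, so that $h = m\cdot e$ is the desired $(\E\text{-carried},\M\text{-carried})$-factorization.

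The diagonal fill-in axiom in $\Coalg(F)$ is the main obstacle. Given a commuting square of homomorphisms with $\E$-carried $e\colon (A,a)\to(B,b)$ on top and $\M$-carried $m\colon(C,c)\to(D,d)$ on the bottom, and with sides $f\colon (A,a)\to(C,c)$ and $g\colon (B,b)\to(D,d)$, the diagonalization in $\C$ produces a unique $\C$-morphism $w\colon B\to C$ with $m\cdot w = g$ and $w\cdot e = f$; uniqueness of $w$ as a homomorphism is then inherited from its uniqueness in $\C$. What remains — and is the real content — is to verify that $w$ is itself a coalgebra homomorphism, i.e.~that $c\cdot w = Fw\cdot b$. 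My plan is to obtain this from a second application of uniqueness of diagonals in $\C$: I would show that both $c\cdot w$ and $Fw\cdot b$ are diagonal fill-ins of the single square with top edge $e\in\E$, bottom edge $Fm\in\M$ (again using that $F$ preserves $\M$), left leg $Ff\cdot a$, and right leg $d\cdot g$. A short diagram chase, using in turn that $f$, $e$, $m$, $g$ are homomorphisms, shows that both candidates $\delta$ satisfy $\delta\cdot e = Ff\cdot a$ and $Fm\cdot\delta = d\cdot g$; uniqueness of the fill-in in $\C$ then forces $c\cdot w = Fw\cdot b$, so $w$ is a homomorphism and the lifted factorization system is established. Finally, I would remark that the constructed factorization and fill-in are literally the ones from $\C$, as asserted in the statement.
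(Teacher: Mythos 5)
Your proposal is correct and follows essentially the same route as the paper's proof: the coalgebra structure on $\Im(h)$ is induced as the unique diagonal of the square with top edge $e$ and bottom edge $Fm$ (using that $F$ preserves $\M$), and the diagonal fill-in $w$ is shown to be a homomorphism by exhibiting both $c\cdot w$ and $Fw\cdot b$ as diagonals of one and the same square with top $e\in\E$, bottom $Fm\in\M$, left leg $Ff\cdot a$ and right leg $d\cdot g = Fg\cdot b$, exactly as in the paper. The remaining points (closure under composition with isomorphisms, uniqueness of $w$ inherited via the faithful forgetful functor) match as well.
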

\begin{proof}
  We verify \autoref{D:factSystem}:
  \begin{enumerate}
  \item The $\E$- and $\M$-carried morphisms are closed under composition
    with isomorphisms, respectively.

  \item
    Given an $F$-coalgebra morphism $f\colon (A,a)\to (B,b)$,
  consider its factorization $f=m\cdot e$ in $\C$.
  Since $F$ preserves $\M$, we have $Fm\in \M$ and thus can apply the
  diagonal fill-in property (\itemref{D:factSystem}{diagonalization})
  to the coalgebra morphism square of $f$:
  \[
    \begin{tikzcd}[baseline={([yshift=-2mm]f.base)}]
      |[alias=A]|
      A
      \arrow[shiftarr={yshift=6mm}]{rr}[alias=f]{f}
      \arrow[->>]{r}{e}
      \arrow{d}{a}
      & \Im(f)
      \arrow[>->]{r}{m}
      \arrow[dashed]{d}{\exists! d}
      & B
      \arrow{d}{b}
      \\
      FA
      \arrow[->]{r}{Fe}
      \arrow[shiftarr={yshift=-6mm}]{rr}[swap]{Ff}
      & F\Im(f)
      \arrow[>->]{r}{Fm}
      & FB
    \end{tikzcd}
  \]
  This defines a unique
  coalgebra structure $d$ on $\Im(f)$ making $e$ and $m$ coalgebra morphisms.

  \item 
    In order to check the diagonal-lifting property of the
    ($\E$-carried, $\M$-carried)-factorization system, consider a commutative
    square $g\cdot e = m\cdot f$ in $\Coalg(F)$ with $m\in \M$, $e\in \E$:
    \[
      \begin{tikzcd}
        |[alias=A]|
        (A,a)
        \arrow[->>]{r}{e}
        \arrow{d}[swap]{f}
        & (B,b)
        \arrow{d}{g}
        \\
        (C,c)
        \arrow[>->]{r}{m}
        & (D,d)
      \end{tikzcd}
      \text{(in \(\Coalg(F)\))}
      \qquad\Longrightarrow\qquad
      \begin{tikzcd}[sep = 9mm]
        |[alias=A]|
        A
        \arrow[->>]{r}{e}
        \arrow{d}[swap]{f}
        & B
        \arrow{d}{g}
        \arrow[dashed]{dl}[description]{\exists !h}
        \\
        C
        \arrow[>->]{r}{m}
        & D
      \end{tikzcd}
      \text{ (in \(\C\))}
    \]
  In $\C$, there exists a unique $h\colon B\to C$ with
  $h\cdot e = f$ and $m\cdot h =g$.
  We only need to prove that $h\colon B\to C$ is a coalgebra homomorphism
  $(B,b)\to (C,c)$, i.e.~that $c\cdot h = Fh\cdot b$.
  We prove this equality by
  showing that both $c\cdot h$ and $Fh\cdot b$ are diagonals in a commutative
  square of the form of \itemref{D:factSystem}{diagonalization}.
  Indeed, we have the commutative squares:
  \[
    \begin{tikzcd}[sep=12mm]
      A
      \arrow{d}[swap]{a}
      \arrow{dr}{f}
      \arrow[->>]{rr}{e}
      &
      & B
      \arrow{dl}[swap]{h}
      \arrow{d}{g}
      \arrow{dr}{b}
      \\
      FA
      \arrow{d}[swap]{Ff}
      \descto{r}{f\text{ hom.}}
      & C
      \arrow[>->]{r}{m}
      \arrow{dl}{c}
      \descto{dr}{m\text{ hom.}}
      & D
      \descto{r}{g\text{ hom.}}
      \arrow{d}{d}
      & FB
      \arrow{dl}{Fg}
      \\
      FC
      \arrow[>->]{rr}{Fm}
      & & FD
    \end{tikzcd}
    \quad\text{and}\quad
    \begin{tikzcd}[sep=12mm]
      A
      \arrow[->>]{rr}{e}
      \arrow{d}[swap]{a}
      \descto{dr}{e\text{ hom.}}
      && B
      \arrow{dl}{b}
      \arrow{d}{b}
      \\
      FA
      \arrow{d}[swap]{Ff}
      \arrow{r}{Fe}
      & FB
      \arrow{dl}{Fh}
      \arrow{dr}{Fg}
      \descto{r}{\text{trivial}}
      & FB
      \arrow{d}{Fg}
      \\
      FC
      \arrow[>->]{rr}{Fm}
      & & FD
    \end{tikzcd}
  \]
  By the uniqueness of the diagonal in \itemref{D:factSystem}{diagonalization}, $c\cdot h =
  Fh\cdot b$.
  \qedhere
  \end{enumerate}
\end{proof}

\begin{remark}
  The condition that $F$ preserves $\M$ is commonly met. For $\Set$ and $\M$
  being the class of injective maps, it can be assumed wlog for coalgebraic
  purposes that $F$ preserves injective maps: every set functor preserves
  injective maps with non-empty domain and only needs to be modified on
  $\emptyset$ in order to preserve all injective maps~\cite{trnkova71}.
  The resulting functor has an isomorphic category of coalgebras.
\end{remark}

\begin{example}
  We saw an example $2\times(-)^{\set{a,b}}$-coalgebra morphism $h\colon
  (C,c)\to(D,d)$ between DFAs in \autoref{figExDfaHom}. Since $h$ was neither
  injective nor surjective, it properly factors into a surjective $e\colon
  (C,c)\epito (I,i)$ and an injective $m\colon (\Im(h),i)\monoto (D,d)$, $h=m\cdot e$, as
  illustrated in \autoref{figExDfaHomFactor}. The carrier of the intermediate
  coalgebra is just the image $\Im(h)$ of the map $h\colon C\to D$.
\end{example}

\begin{figure}
    \begin{tikzpicture}[coalgfit/.style={
        coalgebra frame,
        inner xsep=5mm,
        inner ysep=5mm,
        xshift=0mm,
        yshift=1mm,
      },
      x=13mm,y=13mm,
      ]
    \begin{scope}[coalgebra,shift={(0,0)},add state borders]
      \drawDfaC
      \node[coalgfit,fit={(s) (q) (p)}] (C) {};
      \node[coalgname,anchor=west] at (C.north west) {$(C,c)$};
    \end{scope}
    \begin{scope}[coalgebra,shift={(5cm,0)},add state borders]
      \drawDfaImh
      \node[coalgfit,inner xsep=3mm,xshift=1.5mm,fit={(s) (r) (p)}] (Imh) {};
      \node[coalgname,anchor=west] at (Imh.north west) {$(\Im(h),i)$};
    \end{scope}
    \begin{scope}[coalgebra,shift={(10cm,0)},add state borders]
      \drawDfaD
      \node[coalgfit,fit={(s) (r) (p)}] (D) {};
      \node[coalgname,anchor=west] at (D.north west) {$(D,d)$};
    \end{scope}
    \begin{scope}[path/.style={
        commutative diagrams/.cd, every arrow, every label,
        shorten <= 2mm,shorten >= 2mm,
      }]
      \path[path] (C) edge[->>]
      node[font=\normalsize] {$e$}
      (Imh) ;
      \path[path] (Imh) edge[>->]
      node[font=\normalsize] {$m$}
      (D) ;
      \coordinate (hStart) at ([xshift=2mm]C.north);
      \coordinate (hEnd) at ([xshift=3mm]D.north);
      \path[path,rounded corners] (hStart)
      -- ([yshift=8mm]hStart)
      -- node[font=\normalsize] {$h$} ([yshift=8mm]hStart -| hEnd)
      -- (hEnd) ;
    \end{scope}
  \end{tikzpicture}
  \caption{Factorization $h=m\cdot e$ in $\Coalg(2\times(-)^{\{a,b\}})$ of $h$
    from \autoref{figExDfaHom}}
  \label{figExDfaHomFactor}
\end{figure}
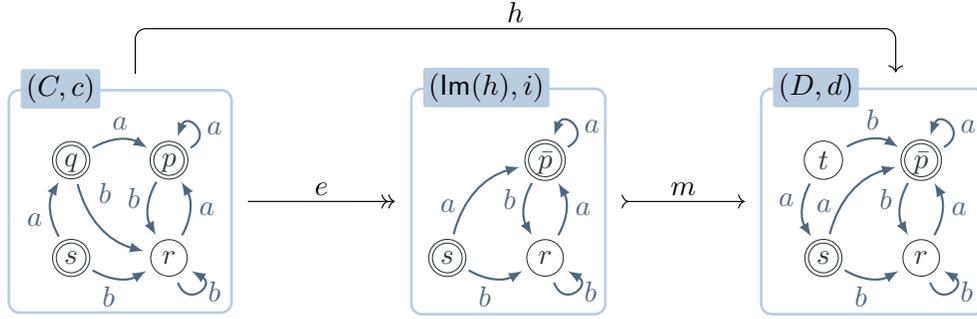

We have the dual result for factorization systems in $F$-algebras:
\begin{lemma}
  \label{algfactor}
  If $F\colon \C\to\C$ preserves $\E$, then the $(\E,\M)$-factorization system
  lifts from $\C$ to $\Alg(F)$.
\end{lemma}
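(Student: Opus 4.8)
The plan is to derive this statement from \autoref{coalgfactor} by duality, using the identification $\Alg(F) = \Coalg(F^\dual)^\dual$ already recorded in the preliminaries. First I would observe that an $(\E,\M)$-factorization system on $\C$ becomes an $(\M^\dual,\E^\dual)$-factorization system on $\C^\dual$: a factorization $f = m\cdot e$ in $\C$ reads as $f^\dual = e^\dual\cdot m^\dual$ in $\C^\dual$, so the two classes swap their roles and each has its arrows reversed, while the diagonal fill-in axiom is self-dual and transports without trouble. Thus $\C^\dual$ carries an $(\M^\dual,\E^\dual)$-factorization system.

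The crucial bookkeeping step is to check that the hypothesis dualizes correctly. In the $(\M^\dual,\E^\dual)$-system on $\C^\dual$ the right-hand (``$\M$''-role) class is $\E^\dual$, so \autoref{coalgfactor} applied to $F^\dual\colon\C^\dual\to\C^\dual$ requires $F^\dual$ to preserve $\E^\dual$. Since $F^\dual(e^\dual) = (Fe)^\dual$, preservation of $\E^\dual$ by $F^\dual$ is literally the statement that $F$ preserves $\E$, which is our hypothesis. Hence \autoref{coalgfactor} yields an $(\M^\dual\text{-carried},\E^\dual\text{-carried})$-factorization system on $\Coalg(F^\dual)$, with factorizations and diagonals computed in $\C^\dual$.

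It remains to dualize back. Passing from $\Coalg(F^\dual)$ to $\Alg(F) = \Coalg(F^\dual)^\dual$ reverses arrows and swaps the two carried-classes once more: an $\M^\dual$-carried morphism dualizes to an $\M$-carried one and an $\E^\dual$-carried morphism to an $\E$-carried one, and their order flips. This turns $(\M^\dual\text{-carried},\E^\dual\text{-carried})$ into an $(\E\text{-carried},\M\text{-carried})$-factorization system on $\Alg(F)$, as claimed, with factorizations and diagonals again those of $\C$. The only real obstacle is this double swap of the two morphism classes across the two dualizations; it is easy to lose track of which class plays the ``epi'' role at each stage, so I would write out each class explicitly at every step rather than appeal to the symmetry informally.

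If one prefers an argument internal to $\Alg(F)$, the proof of \autoref{coalgfactor} dualizes essentially verbatim: given an algebra morphism $f\colon(A,a)\to(B,b)$ with $f = m\cdot e$ factored in $\C$, one obtains the structure on $\Im(f)$ as the diagonal fill-in of the square whose $\E$-edge is $Fe$ (in $\E$ since $F$ preserves $\E$) and whose $\M$-edge is $m$, the two remaining edges being $e\cdot a$ and $b\cdot Fm$; this makes both $e$ and $m$ into algebra homomorphisms. The diagonal-lifting axiom in $\Alg(F)$ is then checked by the same uniqueness-of-diagonals argument as in \autoref{coalgfactor}. Here the hypothesis that $F$ preserves $\E$ is exactly what places $Fe$ on the $\E$-side of the fill-in square, mirroring how preservation of $\M$ placed $Fm$ on the $\M$-side in the coalgebra proof.
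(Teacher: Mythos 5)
Your proposal is correct and follows essentially the same route as the paper's proof: dualize the factorization system to $\C^\dual$, apply \autoref{coalgfactor} to $F^\dual$ (whose hypothesis of preserving the $\M$-role class is exactly preservation of $\E$ by $F$), and dualize back via $\Alg(F) = \Coalg(F^\dual)^\dual$. Your explicit bookkeeping of the class swaps, and the sketched direct fill-in argument in $\Alg(F)$, are both sound but add nothing beyond the paper's three-line duality argument.
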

\begin{proof}
  We have an $(\M,\E)$-factorization system in $\C^\dual$. By
  \autoref{coalgfactor}, this factorization system lifts to $\Coalg(F^\dual)$
  since $F^\dual\colon \C^\dual\to \C^\dual$ preserves $\E$. Thus, we have an
  ($\E$-carried, $\M$-carried)-factorization system in $\Alg(F) = \Coalg(F^\dual)^\dual$.
\end{proof}

This lifting result even holds for Eilenberg-Moore algebras for a monad $T\colon
\C\to\C$. The Eilenberg-Moore category of a monad $T$ is a full subcategory of
$\Alg(T)$ containing those algebras that interact coherently with the
multiplication $\mu\colon TT\to T$ and unit $\eta\colon \Id_\C\to T$
of the monad $T$, see~e.g.~Awodey~\cite{awodey2010category} for details.
Concretely, a $T$-algebra $(A,a)$ is an Eilenberg-Moore algebra if $a\cdot
\eta_A = \id_A$ holds and $a\colon TA\to A$ is a $T$-algebra homomorphism $a\colon
(TA,\mu_A)\to (A,a)$.
\begin{proposition}
  \label{monadfactor}
  If a monad $T\colon \C\to\C$ preserves $\E$, then the $(\E,\M)$-factorization system
  lifts from $\C$ to $\EM(T)$, the Eilenberg-Moore category of $T$.
\end{proposition}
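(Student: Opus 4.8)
The plan is to build on the algebra case already established in \autoref{algfactor}. Since $\EM(T)$ is a full subcategory of $\Alg(T)$ and $T$ preserves $\E$, that lemma (applied to the plain functor $T$) already equips $\Alg(T)$ with an $(\E\text{-carried},\M\text{-carried})$-factorization system. The closure under composition with isomorphisms and the diagonal fill-in property (\itemref{D:factSystem}{diagonalization}) are then inherited by $\EM(T)$ for free: the diagonal constructed in $\Alg(T)$ is a $T$-algebra homomorphism, and by fullness it is automatically a morphism of $\EM(T)$ whenever its domain and codomain are Eilenberg-Moore algebras. Hence the only nontrivial point is the existence of factorizations \emph{inside} $\EM(T)$, i.e.\ that the intermediate $T$-algebra produced in $\Alg(T)$ is again an Eilenberg-Moore algebra.

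So first I would take a morphism $f\colon (A,a)\to(B,b)$ of $\EM(T)$, factor the underlying map as $f = m\cdot e$ with $e\colon A\epito I$ in $\E$ and $m\colon I\monoto B$ in $\M$, and recall from the construction in \autoref{coalgfactor} (dualized in \autoref{algfactor}) that the algebra structure $i\colon TI\to I$ on $I=\Im(f)$ is the unique diagonal making $e$ and $m$ into $T$-algebra homomorphisms, so that $i\cdot Te = e\cdot a$ and $m\cdot i = b\cdot Tm$. It then remains to verify the two Eilenberg-Moore laws for $(I,i)$, namely the unit law $i\cdot\eta_I = \id_I$ and the associativity law $i\cdot\mu_I = i\cdot Ti$.

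The key idea — and the reason the argument does not need a \emph{proper} factorization system — is to prove each law not by cancelling $e$ (which need not be an epimorphism) but by exhibiting both sides as diagonal fill-ins of one and the same square and invoking the uniqueness in \itemref{D:factSystem}{diagonalization}, exactly as in the proof of \autoref{coalgfactor}. For the unit law I would use the commutative square $m\cdot e = m\cdot e$ whose top and left legs are $e\in\E$ and whose bottom and right legs are $m\in\M$; its unique diagonal is $\id_I$, and a short computation using naturality of $\eta$, the homomorphism equations for $e$ and $m$, and the unit laws of $(A,a)$ and $(B,b)$ shows that $i\cdot\eta_I$ is also a diagonal, whence $i\cdot\eta_I=\id_I$. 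For associativity I would use the square with top edge $TTe$ — which lies in $\E$ because $T$ preserves $\E$, so that $Te,\,TTe\in\E$ — and right edge $m\in\M$; then naturality of $\mu$ together with the homomorphism equations yields $i\cdot\mu_I\cdot TTe = e\cdot a\cdot\mu_A = e\cdot a\cdot Ta = i\cdot Ti\cdot TTe$ and $m\cdot i\cdot\mu_I = b\cdot\mu_B\cdot TTm = b\cdot Tb\cdot TTm = m\cdot i\cdot Ti$, using the associativity laws of $(A,a)$ and $(B,b)$ in the middle equalities. Thus $i\cdot\mu_I$ and $i\cdot Ti$ are both diagonals of that square and therefore equal.

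I expect the main obstacle to be precisely the bookkeeping in the associativity step: one must set up the correct square over $TT(-)$, check that it commutes, and chase the two monad laws together with the two homomorphism equations in the right order. The non-properness of $(\E,\M)$ is what forces the diagonal-uniqueness argument throughout, rather than a one-line cancellation of $e$.
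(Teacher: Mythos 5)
Your proposal is correct, and its skeleton matches the paper's proof: reduce everything to showing that the intermediate algebra $(I,i)$ produced in $\Alg(T)$ satisfies the two Eilenberg--Moore laws, inherit the remaining factorization-system axioms from $\Alg(T)$ by fullness of $\EM(T)$, and prove the unit law $i\cdot\eta_I=\id_I$ exactly as the paper does, by exhibiting $i\cdot\eta_I$ and $\id_I$ as diagonals of the square $m\cdot e=m\cdot e$ and invoking uniqueness (so no properness of $(\E,\M)$ is needed). Where you genuinely diverge is the associativity law. The paper avoids any computation at the $TT$-level: it forms the commutative square in $\Alg(T)$ with top edge $Te\colon(TA,\mu_A)\epito(TI,\mu_I)$, left leg $e\cdot a$, right leg $b\cdot Tm$ and bottom edge $m$ (all legs are $T$-algebra homomorphisms precisely because $(A,a)$ and $(B,b)$ are Eilenberg--Moore algebras), and then applies the already-lifted factorization system of \autoref{algfactor}: since the diagonal fill-in in $\Alg(T)$ is computed as in $\C$, it must be $i$ itself, so $i$ is a homomorphism $(TI,\mu_I)\to(I,i)$, which is exactly $i\cdot\mu_I=i\cdot Ti$. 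You instead run the uniqueness-of-diagonal argument directly in $\C$, over the square with top edge $TTe\in\E$ and right edge $m\in\M$, checking by hand (via naturality of $\mu$, the homomorphism equations for $e$ and $m$, and the associativity laws of $(A,a)$ and $(B,b)$) that both $i\cdot\mu_I$ and $i\cdot Ti$ fill it; your displayed equalities are all correct. Both arguments are sound and use the same hypotheses ($T$ preserves $\E$); the paper's version buys brevity and conceptual economy by reusing the lifted system in $\Alg(T)$ -- indeed the paper notes this proof was simplified relative to the conference version, and your chase is essentially that more computational predecessor -- while your version buys self-containedness, since it needs nothing about $\Alg(T)$ beyond the bare homomorphism equations for $e$ and $m$.
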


\begin{proof}
  Denote the unit and multiplication of the monad $T$ by $\eta\colon \Id\to T$
  and $\mu\colon TT\to T$, respectively.
  Consider an $T$-algebra homomorphism $f\colon (A,a)\to (B,b)$ for Eilenberg-Moore
  algebras $(A,a)$ and $(B,b)$ and denote its image factorization in $\Alg(T)$
  by $(I,i)$, with homomorphisms $e\colon (A,a)\epito (I,i)$ and $m\colon
  (I,i)\monoto (B,b)$. We verify that $i\colon TI\to I$ is an Eilenberg-Moore algebra.
  \begin{itemize}
  \item First, we verify $i\cdot \eta_I = \id_I$ by showing that both $i\cdot
    \eta_I$ and $\id_I$ are both diagonals of the following square:

    \[
      \begin{tikzcd}
        A
        \arrow[->>]{rr}{e}
        \arrow[->>]{dd}[swap]{e}
        & & I
        \arrow[>->]{dd}{m}
        \arrow[->]{ddll}{\id_I}
        \\
        \\
        I
        \arrow[>->]{rr}{m}
        && B
      \end{tikzcd}
      \qquad
      \begin{tikzcd}
        A
        \arrow[->>]{rrr}{e}
        \arrow{dr}{\eta_A}
        \arrow{dd}[swap]{\id_A}
        & {} \descto{dr}{(N)}
        & & I
        \arrow{dl}{\eta_I}
        \arrow[>->]{d}{m}
        \\
        {} \descto{r}{(A)}
        & TA
        \arrow{r}{Te}
        \arrow{dl}{a}
        & TI
        \descto{r}{(N)}
        \arrow{d}{Tm}
        \arrow{dldl}[pos=0.60]{i}
        & B
        \arrow{dl}{\eta_B}
        \arrow{dd}{\id_B}
        \\
        A
        \arrow{d}[swap]{e}
        &
        \descto{u}{(H)}
        \descto{r}{(H)}
        & TB
        \arrow{dr}{b}
        & \descto{l}{(A)}
        \\
        I
        \arrow[>->]{rrr}{m}
        & & & B
      \end{tikzcd}
    \]
    The left-hand square commutes trivially, and the right-hand square commutes
    because $\eta$ is natural (N), because $e$ and $m$ are $T$-algebra
    homomorphisms (H), and because $(A,a)$ and $(B,b)$ are Eilenberg-Moore
    algebras (A).
    By the uniqueness of the diagonal lifting property, we obtain $i\cdot \eta_I
    = \id_I$.

  \item Next, we verify $i\cdot Ti = i\cdot \mu_i$. In $\Alg(T)$, we have the
    following commutative square:
    \[
      \begin{tikzcd}
        (TA,\mu_A)
        \arrow[->>]{r}{Te}
        \arrow{d}[swap]{a}
        & (TI,\mu_I)
        \arrow{d}{Tm}
        \\
        (A,a) \arrow{d}[swap]{e}
        & (TB,\mu_B)
        \arrow{d}{b}
        \\
        (I,i)
        \arrow[>->]{r}{m}
        & (B,b)
      \end{tikzcd}
    \]
    All mentioned morphisms are indeed $T$-algebra morphisms, because $a$ and
    $b$ are Eilenberg-Moore algebras. By the previous lifting result
    (\autoref{algfactor}), the diagonal fill-in $(TI,i)\to (I,i)$ in $\Alg(T)$
    is given by the diagonal fill-in in $\C$, which is $i\colon TI\to I$. Hence,
    $i$ is a $T$-algebra homomorphism.
  \end{itemize}
  Thus, $(I,i)$ fulfils the axioms of an Eilenberg-Moore algebra. The remaining
  properties of the factorization system hold because the Eilenberg-Moore
  category is a full subcategory of $\Alg(T)$.
\end{proof}

The factorization system also lifts further to pointed coalgebras:

\begin{lemma}
  If $F\colon \C\to\C$ preserves $\M$, then the $(\E,\M)$-factorization system
  lifts from $\C$ to $\Coalg_I(F)$.
\end{lemma}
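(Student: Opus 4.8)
The plan is to bootstrap off \autoref{coalgfactor}, which already provides an ($\E$-carried, $\M$-carried)-factorization system on $\Coalg(F)$; the only extra work is to transport the point through that factorization and through the diagonal fill-in. Concretely, the forgetful functor $\Coalg_I(F)\to\Coalg(F)$ forgets the point $i_C\colon I\to C$, and I would verify the three axioms of \autoref{D:factSystem} by reusing the corresponding $\Coalg(F)$-data and endowing every newly constructed object with a canonical point.

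For the factorization axiom, start with a pointed homomorphism $f\colon(A,a,i_A)\to(B,b,i_B)$ and factor the underlying coalgebra morphism as $f=m\cdot e$ in $\Coalg(F)$ using \autoref{coalgfactor}, with $e\colon(A,a)\epito(\Im(f),\hat{f})$ and $m\colon(\Im(f),\hat{f})\monoto(B,b)$. I would equip the intermediate coalgebra with the point $i_{\Im(f)} := e\cdot i_A\colon I\to\Im(f)$. Then $e$ is pointed by construction, and $m$ is pointed because $m\cdot i_{\Im(f)} = m\cdot e\cdot i_A = f\cdot i_A = i_B$, using that $f$ is a pointed morphism. Hence $f=m\cdot e$ is a factorization in $\Coalg_I(F)$ into an $\E$-carried and an $\M$-carried pointed homomorphism.

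For the diagonal fill-in, consider a commutative square $g\cdot e = m\cdot f$ in $\Coalg_I(F)$ with $e$ being $\E$-carried and $m$ being $\M$-carried. \autoref{coalgfactor} supplies a unique diagonal $h$ in $\Coalg(F)$ with $h\cdot e=f$ and $m\cdot h=g$; I only need to check that $h$ respects the points, i.e.\ that $h\cdot i_B = i_C$. Since $e$ is pointed we have $i_B = e\cdot i_A$, whence $h\cdot i_B = h\cdot e\cdot i_A = f\cdot i_A = i_C$, the last step using that $f$ is pointed. Thus $h$ is a pointed homomorphism, and its uniqueness in $\Coalg_I(F)$ is inherited from its uniqueness in $\Coalg(F)$, since any pointed diagonal is in particular a coalgebra diagonal. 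Finally, the first axiom — closure under composition with isomorphisms — transfers immediately, as the underlying morphisms are closed under such composition and composing with a pointed isomorphism preserves the two pointing equations.

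I do not expect a genuine obstacle here: once the point on the image is chosen as $e\cdot i_A$, both the factorization and the fill-in reduce to pushing the already-established $\Coalg(F)$-data through the defining equations of a pointed morphism. The only thing to get right is this canonical choice of point, which is in fact forced by requiring $e$ to be pointed. (Conceptually, this is an instance of a general coslice principle: an $I$-pointed $F$-coalgebra is an $F$-coalgebra carrying an additional algebra structure for the constant functor $K_I X = I$, which trivially preserves $\E$ since it maps every morphism to $\id_I\in\Iso\subseteq\E$ by \itemref{rem:EM}{rem:EM:iso}; but the direct argument above is shorter than setting up that machinery.)
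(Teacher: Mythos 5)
Your proof is correct, and its content matches the paper's — but the two differ in how the work is packaged. The paper's entire proof reads: ``A combination of \autoref{coalgfactor} and \autoref{algfactor}, using that the constant functor preserves $\E$-morphisms'' — that is, exactly the coslice principle you relegate to your final parenthetical. Since a point $i_C\colon I\to C$ is an algebra for the constant functor $K_I$, and $K_I$ sends every morphism to $\id_I\in\Iso\subseteq\E$, the already-proved \autoref{algfactor} handles the point for free; and because both \autoref{coalgfactor} and \autoref{algfactor} produce factorizations and diagonals whose underlying $\C$-morphisms are those of $\C$, the two liftings compose without friction. In the paper's context that machinery is already set up, so the citation costs nothing, whereas you judged the direct route shorter — a reasonable call in isolation. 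What your unfolded version buys is explicitness: it makes visible that the point on the image is forced to be $i_{\Im(f)}=e\cdot i_A$ (the only choice making $e$ pointed), that $m$ is then automatically pointed via $m\cdot e\cdot i_A=f\cdot i_A=i_B$, and that the diagonal fill-in respects points by the same one-line computation $h\cdot i_B=h\cdot e\cdot i_A=f\cdot i_A=i_C$; your handling of uniqueness (inherited from $\Coalg(F)$) and of axiom (1) (a pointed morphism with invertible underlying morphism has an automatically pointed inverse) is also sound. So: same idea, with the paper's modular appeal to \autoref{algfactor} replaced by your hand verification of what that lemma provides in the constant-functor case.
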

\begin{proof}
  A combination of \autoref{coalgfactor} and \ref{algfactor}, using that the
  constant functor preserves $\E$-morphisms.
  \takeout{}
\end{proof}

\section{Minimality in a Category}
\label{secMinObj}
Having seen multiple categories with an $(\E,\M)$-factorization system, we can now
define the minimality of objects abstractly.
\begin{definition}
  Given a category $\K$ with an $(\E,\M)$-factorization system, an object $C$ of
  $\K$ is called \emph{$\M$-minimal}\index{minimal@$\M$-minimal} if every
  morphism $h\colon D\monoto C$ in $\M$ is an isomorphism.
\end{definition}
\begin{remark}
  Every $(\E,\M$)-factorization system on $\K$ is an $(\M,\E)$-factorization
  system on $\K^\dual$, and thus induces a dual notion of
  $\E$-minimality: an object $C$ of $\K$ is called \emph{$\E$-minimal} if every $h\colon
  C\epito D$ in $\E$ is an isomorphism.
\end{remark}

In the following, $\K$ will denote
the category in which we consider the minimal objects, e.g.~the category of
coalgebras for a functor $F\colon \C\to \C$.
\begin{assumption}
  \label{assDualFactor}
  In the following, assume that the category $\K$ has an $(\E,\M)$-fac\-tor\-iza\-tion
  system. Whenever we consider a category of coalgebras for a functor $F\colon
  \C\to \C$ in the following, we achieve this by assuming that $\C$ has an
  $(\E,\M)$-factorization system and that $F$ preserves~$\M$.
\end{assumption}
The leading examples of the minimality notion in the present work are the following
two instances in coalgebras:
\begin{instance}
  For $\K := \Coalg_I(F)$, the ($\M$-carried-)minimal objects are the \emph{reachable}
  coalgebras, as introduced by~\Adamek\ \etal~\cite{amms13}.
  Concretely, an $I$-pointed $F$-coalgebra $(C,c,i_C)$ is reachable if it has no (proper)
  pointed subcoalgebra, equivalently, if every $\M$-carried coalgebra morphism
  $m\colon (S,s,i_S)\monoto (C,c,i_C)$ is necessarily an isomorphism~\cite{amms13}.

  In \Set, this corresponds to the usual notion of reachability: if a state
  $x\in C$ is contained in a subcoalgebra $m\colon (S,s,i_S)\monoto (C,c,i_C)$,
  then all successors of $x$ need to be contained in the subcoalgebra as well,
  since $m$ is a coalgebra homomorphism. Moreover, the subcoalgebra has to
  contain the point $i_C\colon I\to C$, and thus also all
  its successors, and in total all states reachable from $i_C$ in finitely many
  steps. Hence, $(C,c,i_C)$ is reachable if it is not possible to omit any
  state in a pointed subcoalgebra $(S,s,i_S)$, i.e.~if any such injective $m$ is
  a bijection.
\end{instance}

\begin{instance}
  For $\K := \Coalg(F)^\dual$, the ($\E$-carried-)minimal objects are called
  \emph{simple} coalgebras, as mentioned~by Gumm~\cite{Gumm03}. Usually, a
  simple coalgebra is defined as a coalgebra that does not have any proper
  quotient~\cite{WissmannDMS19}.\footnote{Gumm~\cite[p.~34]{Gumm03} defines a simple coalgebra as the
    quotient of a coalgebra on $\Set$ modulo behavioural equivalence.}

  In \Set, a coalgebra is simple iff all states have different behaviour --
  this characterization follows
  directly the following equivalent characterization of minimal objects as we
  will see in \autoref{instSimpleBehEq}:
\end{instance}

\begin{lemma}
  \label{minimalInE}
  An object $C$ in $\K$ is $\M$-minimal iff every $h\colon D\to C$ is in $\E$.
\end{lemma}
\begin{proof}
  In the \singlequote{if} direction, consider some $\M$-morphism $h\colon D\to
  C$. By the assumption, $h$ is also in $\E$ and thus an isomorphism.
  In the \singlequote{only if} direction, take some morphism $h\colon D\to C$
  and consider its $(\E,\M)$-factorization $e\colon D \epito \Im(h)$ and
  $m\colon \Im(h)\monoto C$ with $h=m\cdot e$. Since $C$ is $\M$-minimal, $m$ is
  an isomorphism and thus $h=m\cdot e$ is in $\E$.
\end{proof}
\begin{instance}
  \label{instSimpleBehEq}
  For $\K:=\Coalg(F)^\dual$, an $F$-coalgebra $(C,c)$ is simple iff every
  $F$-coalgebra morphism $h\colon (C,c)\to (D,d)$ is $\M$-carried.
\end{instance}

  In \Set, this equivalence shows that the simple coalgebras are precisely those
  coalgebras for which behavioural equivalence coincides with equality:
  \begin{itemize}
  \item If states $x,y\in C$ are behaviourally equivalent, then there is some
    $h\colon (C,c)\to (D,d)$ with $h(x)=h(y)$. By \autoref{instSimpleBehEq},
    $h$ must be injective and thus $x=y$.

  \item Conversely, if all states in $(C,c)$ have different behaviour, then
    every $h\colon (C,c)\to (D,d)$ is necessarily injective by
    \autoref{defBehEq}. Thus, by \autoref{instSimpleBehEq}, $(C,c)$ is simple.
  \end{itemize}

Gumm already noted that in $\Set$, every outgoing coalgebra morphism from a
simple coalgebra is injective~\cite[Hilfssatz~3.6.3]{Gumm03} -- but the converse direction (and thus the equivalence of \autoref{instSimpleBehEq}) was not mentioned. If $\E$, resp.~$\M$, happens to be the class of
epimorphisms, resp.~monomorphisms, yet another characterization of minimality exists:

\begin{lemma}
  \label{minimalSubterminal}
  Assume $\E=\Epi$ and weak equalizers in $\K$, then $X$ is $\M$-minimal iff
  there is at most one morphism $u\colon C\to D$ for every $D\in \K$.

  Dually, given $\M=\Mono$ and weak coequalizers, $C$ is $\E$-minimal iff $C$ is
  \emph{subterminal}, that is, iff there is a most one $u\colon D\to C$ for
  every $D\in \K$.
\end{lemma}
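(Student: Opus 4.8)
The plan is to reduce both equivalences to the reformulation of minimality provided by \autoref{minimalInE} and its dual, and then to use the (weak) (co)equalizers as the bridge between a pair of parallel morphisms and a single (co)generating morphism touching $C$. Recall that \autoref{minimalInE} says $C$ is $\M$-minimal iff every morphism $h\colon D\to C$ lies in $\E$; under the hypothesis $\E=\Epi$ this reads: \emph{every morphism into $C$ is an epimorphism}. So I would prove the first statement and obtain the second by the duality already used in the excerpt (swapping $\E$ with $\M$, equalizers with coequalizers, and ``into $C$'' with ``out of $C$'').

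For the \singlequote{only if} direction of the first statement, assume $C$ is $\M$-minimal and take two parallel morphisms $u,v\colon C\to D$. I would form a weak equalizer $e\colon E\to C$ of $u$ and $v$, which exists by hypothesis and satisfies $u\cdot e = v\cdot e$. Since $e$ is a morphism \emph{into} $C$, \autoref{minimalInE} together with $\E=\Epi$ forces $e$ to be an epimorphism; cancelling $e$ in $u\cdot e=v\cdot e$ then yields $u=v$. Hence there is at most one morphism $C\to D$. This is the only place where the weak-equalizer assumption is actually used: it manufactures, from any parallel pair out of $C$, a morphism into $C$ on which the pair agrees, precisely the shape to which $\M$-minimality can be applied.

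For the \singlequote{if} direction, assume there is at most one morphism $C\to D$ for every $D$. Given any $h\colon D\to C$, I would check that $h$ is an epimorphism directly: for $u,v\colon C\to E$ with $u\cdot h=v\cdot h$, the hypothesis already gives $u=v$ (there is at most one morphism $C\to E$), so the epi-cancellation condition holds vacuously. Thus every $h\colon D\to C$ is in $\Epi=\E$, and \autoref{minimalInE} gives that $C$ is $\M$-minimal; note this direction needs no weak equalizers at all.

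The dual statement is proved by the same two steps with every arrow reversed: the dual of \autoref{minimalInE} characterizes $\E$-minimality (for $\M=\Mono$) as ``every morphism out of $C$ is a monomorphism,'' and a weak \emph{co}equalizer $q\colon C\to Q$ of a pair $u,v\colon D\to C$ plays the role of $e$, with $q$ monic by $\E$-minimality forcing $u=v$; the converse is again vacuous. I do not anticipate a genuine obstacle here — the content is entirely in recognizing that \autoref{minimalInE} is the correct intermediate characterization and that the codomain (resp.\ domain) of the weak (co)equalizer is exactly $C$, so that minimality becomes applicable.
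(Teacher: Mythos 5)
Your proposal is correct and matches the paper's own proof essentially verbatim: both directions reduce to \autoref{minimalInE}, the \singlequote{only if} direction applies $\M$-minimality to a weak equalizer $e\colon E\to C$ and cancels the epimorphism $e$, and the \singlequote{if} direction verifies that every $h\colon D\to C$ is epic directly from the at-most-one-morphism hypothesis. Your added observations (that weak equalizers are only needed in one direction, and that the dual statement follows by the duality of the factorization system) are accurate and consistent with the paper.
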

The name \emph{subterminal} stems from the fact that if $\K$ has a terminal
object, its subobjects are the subterminal objects.

\begin{proofappendix}{minimalSubterminal}
  We verify the postulated equivalence using \autoref{minimalInE} for $\E=\Epi$.
  \begin{itemize}
  \item For \textqt{if}, we verify that every $h\colon B\to C$ is an
    epimorphism: for $u,v\colon C\to D$ with $u\cdot h=v\cdot h$, we directly
    obtain $u=v$ by assumption. Thus, $h$ is an epimorphism.
  \item For \textqt{only if}, consider $u,v\colon  C\to D$ and take a weak
    equalizer $e\colon E\to C$; hence, $u\cdot e= v\cdot e$. Since $e$ is an
    epimorphism (by minimality), we obtain $u=v$.
    \qed
  \end{itemize}
\end{proofappendix}

\begin{instance}
  For $\K := \Coalg(F)^\dual$, assume $\M=\Mono$ and that $F$ preserves weak
  kernel pairs and that the base category $\C$ has coequalizers. Hence, the monomorphisms in $\Coalg(F)$ are precisely the
  $\Mono$-carried homomorphisms~(\autoref{coalgMono}) and the assumption of
  \autoref{minimalSubterminal} is met. Consequently, the simple
  coalgebras are precisely the \emph{subterminal} coalgebras. If the final
  coalgebra exists, then its subcoalgebras are precisely the simple coalgebras.
  For a non-example,
  Gumm and Schröder~\cite[Example 3.5]{GS05} provide a functor not 
  preserving weak kernel pairs and a subterminal coalgebra that is not simple.
\end{instance}

\begin{figure} \centering
  \begin{tabular}{@{}c@{\ }lc@{\hspace{-3mm}}c@{\,}l@{}}
    \cmidrule[\heavyrulewidth]{1-2}
    \cmidrule[\heavyrulewidth]{4-5}
    \multicolumn{2}{@{}l}{$X$ is $\M$-minimal}
    &
      \hspace*{5mm} %
    & \multicolumn{2}{@{}l}{$X$ is $\E$-minimal}
    \\
    \cmidrule{1-2} \cmidrule{4-5}
    \ensuremath{\Leftrightarrow} &
    every $Y\monoto X$ is an isomorphism
    && \ensuremath{\Leftrightarrow} &every $X\epito Y$ is an isomorphism
    \\
    \cmidrule{1-2} \cmidrule{4-5}
    \ensuremath{\Leftrightarrow} &
    every $Y\to X$ is in $\E$
    &&\ensuremath{\Leftrightarrow} & every $X\to Y$ is in $\M$
    \\
    \cmidrule{1-2} \cmidrule{4-5}
    &
    \makecell[l]{
      \textshaded{if $\E=\Epi$ and $\K$ has weak equalizers:} \\
    }
    &&& \makecell[l]{
       \textshaded{if $\M=\Mono$ and $\K$ has weak coequalizers:}
       }
    \\
    \ensuremath{\Leftrightarrow} &
      all parallel $X\rightrightarrows Y$ equal
    && \ensuremath{\Leftrightarrow} &
      all parallel $Y\rightrightarrows X$ equal
      (\textqt{X subterminal})
    \\
    \cmidrule[\heavyrulewidth]{1-2}
    \cmidrule[\heavyrulewidth]{4-5}
  \end{tabular}
  \caption{Equivalent characterizations of $\M$-minimality and
    $\E$-minimality in a category $\K$}
  \label{figCharacterizations}
\end{figure}

We have now established a series of equivalent characterizations of minimality
(\autoref{figCharacterizations}) and will now discuss how to construct minimal objects.
This process of minimization -- i.e.~of constructing the reachable part or the
simple quotient of a coalgebra -- is abstracted  as follows:

\begin{definition}
  An \emph{$\M$-minimization}\index{minimization@$\M$-minimization} of $C\in \K$
  is a morphism $m\colon D\monoto C$ in $\M$ where $D$ is $\M$-minimal.
\end{definition}

In fact, we will show in \autoref{uniqueMinimization} that an $\M$-minimization is unique, so we can speak of
\emph{the} $\M$-minimization.

\begin{instance}
  The task of finding an $\M$-minimization of a given $C\in \K$ instantiates to
  the standard minimization tasks on coalgebras:
  \begin{itemize}
  \item For $\K:=\Coalg_I(F)$, an $\M$-minimization of a given pointed coalgebra
    $(C,c,i_C)$ is called its \emph{reachable subcoalgebra}~\cite{amms13}. This is a
    subcoalgebra obtained by removing all unreachable states. The explicit
    definition is: the reachable subcoalgebra of $(C,c,i_C)$ is a (pointed)
    subcoalgebra $h\colon (R,r,i_R)\monoto (C,c,i_C)$ where $(R,r,i_R)$ itself has no proper
    (pointed) subcoalgebras.
  \item For $\K:= \Coalg(F)^\dual$, an $\E$-minimization of a given coalgebra
    $(C,c)$ is called the \emph{simple quotient} of $(C,c)$~\cite{Gumm03}. The
    explicit definition is: the simple quotient of $(C,c)$ is a quotient
    $h\colon (C,c)\epito (Q,q)$ where $(Q,q)$ itself has no proper quotient
    coalgebra.

    In \Set, this is a quotient in which all behaviourally equivalent states are
    identified, in other words, the simple quotient of $(C,c)$ is the unique
    coalgebra structure on $C/\mathord{\sim}$ that makes the canonical
    surjection $C\epito C/\mathord{\sim}$ a coalgebra morphism.
    Examples of simple quotients can
    be found in \autoref{figSimpleQuot}. Since all states in the codomain of the
    surjective homomorphisms are behaviourally inequivalent, the respective
    codomains are simple.
  \end{itemize}
\end{instance}

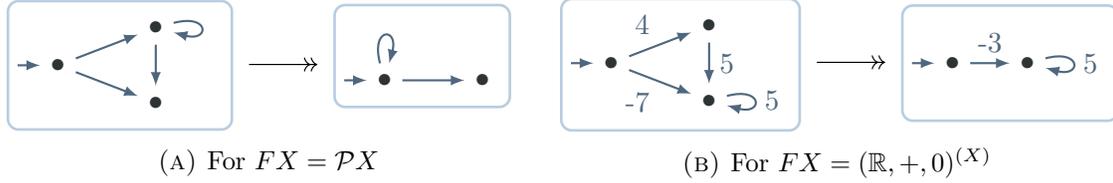
\begin{figure}[t]
  \hfill
  \begin{subfigure}{.46\textwidth}
    \begin{tikzpicture}[coalgebra,baseline=(q0.base)]
      \node[state] (q0) {$\bullet$};
      \node[state] (q1) at (1.3,0.5) {$\bullet$};
      \node[state] (q2) at (1.3,-0.5) {$\bullet$};
      \draw[transition] ([xshift=-3mm]q0.west) to (q0.west);
      \draw[transition] (q0) to (q1);
      \draw[transition] (q0) to (q2);
      \draw[transition] (q1) to (q2);
      \draw[transition,loop right] (q1) to node[alias=loopnode]{} (q1);
      \node[coalgebra frame,fit={([xshift=-3mm]q0.west) (q1) (q2) (loopnode)}] {};
    \end{tikzpicture}%
    \!\!%
    \begin{tikzcd}
      {} \arrow[->>]{r} & {}
    \end{tikzcd}%
    \!\!%
    \begin{tikzpicture}[coalgebra,baseline={([yshift=2mm]q0.base)}]
      \node[state] (q0) {$\bullet$};
      \node[state] (q1) at (1.3,0) {$\bullet$};
      \draw[transition] ([xshift=-3mm]q0.west) to (q0.west);
      \draw[transition] (q0) to (q1);
      \draw[transition,loop above] (q0) to node[alias=loopnode]{} (q0);
      \node[coalgebra frame,fit={([xshift=-3mm]q0.west) (q1) (loopnode)}] {};
    \end{tikzpicture}
    \subcaption{For $FX=\Pow X$}
  \end{subfigure}%
  \hfill
  \begin{subfigure}{.49\textwidth}
    \hfill
        \begin{tikzpicture}[coalgebra,baseline=(q0.base)]
          \node[state] (q0) {$\bullet$};
          \node[state] (q1) at (1.3,0.5) {$\bullet$};
          \node[state] (q2) at (1.3,-0.5) {$\bullet$};
          \draw[transition] ([xshift=-3mm]q0.west) to (q0.west);
          \draw[transition] (q0) to node[above left] {4}(q1);
          \draw[transition] (q0) to node[below left] {-7} (q2);
          \draw[transition] (q1) to node[right] {5}(q2);
          \draw[transition,loop right,overlay] (q2) to node[right,alias=loopnode] {5}(q2);
          \node[coalgebra frame,fit={([xshift=-3mm]q0.west) (q1) (q2) (loopnode)}] {};
        \end{tikzpicture}%
        \!\!%
        \begin{tikzcd}
          {} \arrow[->>]{r}{}
          & {}
        \end{tikzcd}\!\!%
        \begin{tikzpicture}[coalgebra,baseline=(q0.base)]
          \node[state] (q0) {$\bullet$};
          \node[state] (q1) at (1,0) {$\bullet$};
          \draw[transition] ([xshift=-3mm]q0.west) to (q0.west);
          \draw[transition] (q0) to node[above] {-3} (q1);
          \draw[transition,loop right] (q1) to node[right,alias=loopnode] {5}(q1);
          \node[coalgebra frame,inner ysep=5mm,fit={([xshift=-3mm]q0.west) (q1) (loopnode)}] {};
        \end{tikzpicture}%
    \subcaption{For $FX=(\R,+,0)^{(X)}$}
  \end{subfigure}
  \vspace{-2mm} %
  \caption{Examples of simple quotients in $F$-coalgebras}
  \label{figSimpleQuot}
\end{figure}
  
\begin{samepage}
\begin{example} \label{trivMinimal}
  For the trivial factorization systems (\autoref{trivFact}), we have:
  \begin{itemize}
  \item For the $(\Iso,\Mor)$-factorization system, the $\Iso$-minimization of
    an object $X$ is $X$ itself.
  \item For the $(\Mor,\Iso)$-factorization system on category, if a strict
    initial object $0$ exists, then it is the $\Mor$-minimization of every $X\in
    \C$. Recall that an initial object $0$ is called \emph{strict} if every
    morphism with codomain $0$ is an isomorphism.
  \end{itemize}
\end{example}
\end{samepage}

It is well-defined to speak of \emph{the} $\M$-minimization of an object $C$,
because it is unique:

\begin{lemma}\label{MminLeast}
  Consider $h\colon M\to C$ with $\M$-minimal $M$ and an $\M$-subobject
  $s\colon S\monoto C$. The pullback of $s$ along $h$ exists iff $h$ factors
  uniquely through $s$, that is, iff there is a unique $u\colon M\to S$ with $s\cdot u = h$.
\end{lemma}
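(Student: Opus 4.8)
The plan is to prove the two implications of the equivalence separately, using in both directions that $\M$-minimality of $M$ forces certain morphisms to be isomorphisms or to lie in $\E$.

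For the implication that a pullback yields a unique factorization, I would begin with a pullback square of $s$ along $h$, say with projections $p_1\colon P\to M$ and $p_2\colon P\to S$ satisfying $h\cdot p_1 = s\cdot p_2$. The key step is that $\M$ is stable under pullbacks (\itemref{rem:EM}{rem:EM:pullback}), so from $s\in\M$ we get $p_1\in\M$; since $M$ is $\M$-minimal, $p_1$ is an isomorphism. Then $u:=p_2\cdot p_1^{-1}$ satisfies $s\cdot u = h\cdot p_1\cdot p_1^{-1}=h$, giving the factorization. For uniqueness, any $u'$ with $s\cdot u'=h$ makes $(\id_M,u')$ a cone over the pullback, whose mediating morphism $w'\colon M\to P$ must satisfy $p_1\cdot w'=\id_M$; as $p_1$ is an isomorphism this forces $w'=p_1^{-1}$ and hence $u'=p_2\cdot p_1^{-1}=u$.

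For the converse, I would propose $M$ itself as the pullback, with projections $\id_M\colon M\to M$ and $u\colon M\to S$; the defining square commutes since $s\cdot u = h$. Checking the universal property is where the actual argument lies. Given a competing cone $f\colon X\to M$, $g\colon X\to S$ with $h\cdot f = s\cdot g$, the equation $\id_M\cdot t = f$ forces the mediating morphism to be $t=f$, so the whole universal property reduces to the single equation $u\cdot f = g$.

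Establishing $u\cdot f = g$ is the main obstacle, and it is exactly here that minimality and the diagonalization axiom cooperate. Because $\M$ is \emph{not} assumed to consist of monomorphisms, one cannot simply cancel $s$ in $s\cdot g = s\cdot(u\cdot f)$; the diagonal fill-in serves as the substitute for that cancellation. Since $M$ is $\M$-minimal, \autoref{minimalInE} gives that every morphism into $M$ lies in $\E$, so in particular $f\in\E$. The commuting square with $f\in\E$ on top and $s\in\M$ at the bottom then admits, by \itemref{D:factSystem}{diagonalization}, a unique diagonal $d\colon M\to S$ with $s\cdot d = h$ and $d\cdot f = g$. But $s\cdot d = h$ together with the hypothesised uniqueness of the factorization of $h$ through $s$ forces $d=u$, and therefore $u\cdot f = d\cdot f = g$, completing the verification of the universal property.
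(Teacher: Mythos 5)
Your proof is correct and follows essentially the same route as the paper's: for the converse direction you exhibit $M$ itself as the pullback and reduce the universal property to $u\cdot f = g$, obtained exactly as in the paper via \autoref{minimalInE} (the cone leg into $M$ is in $\E$), the diagonal fill-in of \itemref{D:factSystem}{diagonalization}, and the hypothesised uniqueness of the factorization. Your forward direction likewise matches the paper's argument, using stability of $\M$ under pullback (\itemref{rem:EM}{rem:EM:pullback}) and $\M$-minimality to make the projection to $M$ an isomorphism, with uniqueness of $u$ derived from the mediating-morphism property just as in the paper.
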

\[
  \begin{tikzcd}[column sep=5pt]
    &C
    \\
    M
    \arrow[->]{ur}{h}
    \arrow[->,dashed]{rr}{\exists!u}
    &&
    S
    \arrow[>->]{ul}[swap]{\forall s\in \M}
  \end{tikzcd}
  ~~\text{(in $\K$)}
  \qquad
  \begin{tikzcd}[column sep=5pt]
    &C
    \\
    M
    \arrow[<-]{ur}{h}
    \arrow[<-,dashed]{rr}{\exists!u}
    &&
    Q
    \arrow[<<-]{ul}[swap]{\forall q\in \E}
  \end{tikzcd}
  ~~\text{(in $\K^\dual$)}
\]
\begin{proof}
  In the `if'-direction, let $d\colon M\to S$ be the unique morphism with
  $s\cdot d = h$. The pullback is simply given by $M$ itself with
  projections $\id_M\colon M\to M$ and $d\colon M\to S$.
  To verify its universal
  property, consider $e\colon E\to M$, $f\colon E\to S$ with $h\cdot e = s\cdot
  f$ (\autoref{fig:MminIf}).
  Since $M$ is $\M$-minimal, $e\colon E\to M$ is in $\E$ (\autoref{minimalInE}).
  Thus, we can apply the diagonal lifting property to $h\cdot e = s\cdot f$
  yielding a diagonal $u$ with $s\cdot u=h$ and $u\cdot e = f$.
  Thus, $d = u$ and $d\cdot e = f$, showing that $e\colon (E,e,f)\to (M,\id_M,d)$ is the
  mediating cone morphism. Its uniqueness is clear because $\id_M$ is an isomorphism.
  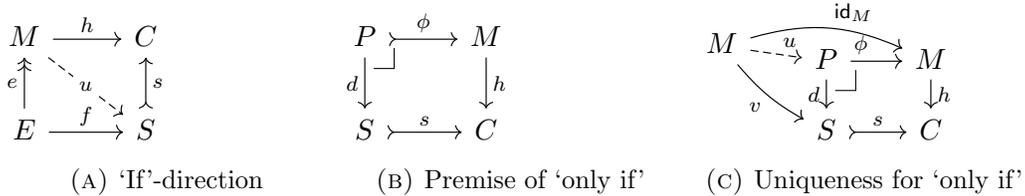
\begin{figure}[h!]
    \begin{subfigure}[b]{.3\textwidth}
      \begin{tikzcd}
        M
        \arrow{r}{h}
        \arrow[dashed]{dr}[description]{u}
        & C
        \\
        E
        \arrow[->>]{u}{e}
        \arrow{r}{f}
        & S
        \arrow[>->]{u}[swap]{s}
      \end{tikzcd}
      \caption{`If'-direction}
      \label{fig:MminIf}
    \end{subfigure}%
    \begin{subfigure}[b]{.3\textwidth}%
      \begin{tikzcd}
        P
        \arrow[>->]{r}{\phi}
        \arrow[->]{d}[swap]{d}
        \pullbackangle{-45}
        &
        M
        \arrow[->]{d}{h}
        \\
        S
        \arrow[>->]{r}{s}
        & C
      \end{tikzcd}
      \caption{Premise of `only if'}
      \label{fig:MminPullback}
    \end{subfigure}
    \begin{subfigure}[b]{.3\textwidth}
      \begin{tikzcd}[row sep=0mm,column sep=7mm]
        |[yshift=2mm]|
        M
        \arrow[bend left=20]{rr}{\id_M}
        \arrow[bend right=10]{dr}[swap]{v}
        \arrow[dashed]{r}{u}
        &
        P
        \arrow{r}[pos=0.2]{\phi}
        \arrow{d}[swap]{d}
        \pullbackangle{-45}
        &
        M
        \arrow[->]{d}{h}
        \\[4mm]
        & S
        \arrow[>->]{r}{s}
        & C
      \end{tikzcd}
      \caption{Uniqueness for `only if'}
      \label{fig:MminUMP}
    \end{subfigure}
    \caption{Diagrams for the proof of \autoref{MminLeast}}
  \end{figure}
  
  In the `only if'-direction, consider the pullback $(P,\phi,d)$ (\autoref{fig:MminPullback}).
  Since $\M$-morphisms are stable under pullback
  (\itemref{rem:EM}{rem:EM:pullback}), $\phi$ is in
  $\M$, too. By the minimality of $M$, the $\M$-morphism $\phi$ is an isomorphism
  and we have $d\cdot \phi^{-1}\colon M\to S$.

  In order to see that $d\cdot \phi^{-1}$ is indeed the unique morphism $M\to S$ making the
  triangle commute, consider an arbitrary $v\colon M\to S$ with $s\cdot v = h =
  h\cdot \id_M$.
  Thus, $M$ is a competing cone for the pullback $P$ and thus induces a morphism
  $u\colon M\to P$ with $d\cdot u = v$ and $\phi\cdot u =\id_M$ (\autoref{fig:MminUMP}). Since $\phi$ is
  an isomorphism, we have $u=\phi^{-1}$ and thus $v=d\cdot \phi^{-1}$ as desired.
\end{proof}

In the following, we use the terminology \emph{$\M$-intersection} for a
pullback of an $\M$-morphism along another $\M$-morphism. For $\M$ being the
injective maps in $\Set$, the $\M$-intersection of two $\M$-subobjects boils
down to an ordinary intersection.
\begin{corollary}
  \label{uniqueMinimization}
  If all $\M$-intersections exist in $\K$, then
  an $\M$-minimization $m\colon M\monoto C$ is the least $\M$-subobject of $C$ (w.r.t.~the preorder $\le$) and unique up to unique isomorphism.
\end{corollary}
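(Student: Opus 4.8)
The plan is to read off both assertions directly from \autoref{MminLeast}, using nothing more than the hypothesis that $\M$-intersections exist. For the least-property, fix an arbitrary $\M$-subobject $s\colon S\monoto C$ and apply \autoref{MminLeast} with $h$ taken to be the minimization $m\colon M\monoto C$ itself (which is legitimate, since $M$ is $\M$-minimal). The pullback of $s$ along $m$ is a pullback of an $\M$-morphism along an $\M$-morphism, i.e.\ an $\M$-intersection, and hence exists by assumption. Thus the equivalent condition in \autoref{MminLeast} is met, giving a unique $u\colon M\to S$ with $s\cdot u = m$, which is precisely $(M,m)\le (S,s)$. As $s$ was arbitrary, $m$ lies below every $\M$-subobject of $C$ and is therefore the least one.

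For the uniqueness statement, suppose $m\colon M\monoto C$ and $m'\colon M'\monoto C$ are both $\M$-minimizations. Applying the least-property in both directions yields $u\colon M\to M'$ with $m'\cdot u = m$ and $u'\colon M'\to M$ with $m\cdot u' = m'$. Composing gives $m\cdot (u'\cdot u) = m'\cdot u = m$, while $\id_M$ also satisfies $m\cdot \id_M = m$; since the pullback of $m$ along itself is again an $\M$-intersection and therefore exists, the uniqueness clause of \autoref{MminLeast} (with $h=s=m$) forces $u'\cdot u = \id_M$. By symmetry $u\cdot u' = \id_{M'}$, so $u$ is an isomorphism with $m'\cdot u = m$. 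Its uniqueness, and hence the uniqueness of the minimization up to unique isomorphism, is once more the uniqueness of the factoring morphism in \autoref{MminLeast}.

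The whole argument is essentially bookkeeping layered on top of \autoref{MminLeast}, so I do not anticipate a serious obstacle. The one point that deserves care is the bookkeeping on the hypothesis: each pullback invoked must genuinely be of the form \emph{$\M$-morphism along $\M$-morphism} so that its existence is supplied by the assumption on $\M$-intersections. It is also worth noting that I nowhere use $\M\subseteq\Mono$ or any cancellation of monomorphisms; the isomorphism $u$ and its uniqueness are extracted solely from the uniqueness of the diagonal fill-in encoded in \autoref{MminLeast}, which is exactly why the result holds without assuming a proper factorization system.
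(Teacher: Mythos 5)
Your proposal is correct and matches the paper's proof, which is exactly the one-liner \textqt{consider \autoref{MminLeast} first for $h\in\M$ and then also with $S$ being $\M$-minimal}: you apply \autoref{MminLeast} with $h=m\in\M$ (so every pullback invoked is a genuine $\M$-intersection) to get leastness, and then between two minimizations to extract the unique isomorphism. Your finishing move --- composing the two factorizations and cancelling via the unique self-factorization of $m$ through itself --- is a faithful expansion of the paper's terse instruction (the surrounding text hints at the equally short alternative $u\in\E\cap\M$ via \autoref{minimalInE} and \itemref{rem:EM}{rem:EM:3}), and your observation that $\M\subseteq\Mono$ is nowhere needed is consistent with the paper.
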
%
\begin{proof}
  Consider \autoref{MminLeast} first for $h\in \M$ and then also with
  $S$ being $\M$-minimal.
\end{proof}
Concretely, for every $\M$-subobject $s\colon S\monoto C$, there is a unique
$u\colon M\to S$ (which is necessarily in $\M$) such that:
\[
  \begin{tikzcd}[column sep=5pt]
    &C
    \\
    M
    \arrow[>->]{ur}{m}
    \arrow[>->,dashed]{rr}{\exists!u}
    &&
    S
    \arrow[>->]{ul}[swap]{s}
  \end{tikzcd}
\]

\begin{instance} The results instantiate to the uniqueness results in the instances of reachable subcoalgebras and simple quotients.
  \begin{listinenv}
  \begin{enumerate}
  \item If $\C$ has pullbacks of $\M$-morphisms
    (i.e.~finite intersections) and $F\colon \C\to \C$ preserves them,
    then $\Coalg_I(F)$ has pullbacks of $\M$-carried homomorphisms. Given a reachable
    subcoalgebra $(D,d,i_D)$ of $(C,c,i_C)$, then it is the least $I$-pointed
    subcoalgebra of $(C,c,i_C)$ (cf.~\cite[Notation~3.18]{amms13}) and is unique
    up to isomorphism.
  \item If $\C$ has pushouts of $\E$-morphisms, then $\Coalg(F)$ has pushouts of
    $\E$-carried homomorphisms. Hence, the simple quotient of a coalgebra
    $(C,c)$ is the greatest quotient of $(C,c)$ and unique up to isomorphism
    (e.g.~\cite[Lemma~2.9]{WissmannDMS19}).
  \end{enumerate}
  \end{listinenv}
\end{instance}

There are instances where a minimization $M$ exists, but where a mediating
morphism in the sense of \autoref{MminLeast} is not unique:

\begin{example}[Tree unravelling]
  \label{treeUnravel}
  Let $\Coalg_I(F)_\reach$ be the category of reachable pointed $F$-coalgebras,
  i.e.~the full subcategory $\Coalg_I(F)_\reach\subseteq \Coalg_I(F)$ such that
  $(C,c,i_C) \in \Coalg_I(F)_\reach$ iff it is reachable. For simplicity,
  restrict to $F\colon \Set\to\Set$ with the $(\Epi,\Mono)$-factorization
  system. Thus, all morphisms in $\Coalg_I(F)_\reach$ are surjective
  (\autoref{minimalInE}). Considering the (trivial) $(\Mor,\Iso)$-factorization
  system on $\Coalg_I(F)_\reach$, a coalgebra $(C,c,i_C)$ is ($\Mor$-)minimal iff
  every coalgebra morphism $h\colon (D,d,i_D)\to (C,c,i_C)$ (with $(D,d,i_D)$
  also reachable) is an isomorphism. If $(C,c,i_C)$ is $\Mor$-minimal, then it
  is a tree: to see this, take $h$ to be its tree unravelling (see e.g.~\autoref{fig:unravel}),
  and by the $\Mor$-minimality, $h$ is an isomorphism, so $(C,c,i_C)$ is already a tree.
  \hackynewpage

  This implies that if the ($\Mor$-)minimization of a coalgebra exists, then it is
  its tree unravelling. For example, for $I=1$ and the bag functor $FX=\Bag X$,
  we have the minimizations as illustrated in \autoref{fig:unravel}. It is easy
  to see that for $FX=\Pow X$ however, no coalgebra (with at least one
  transition) has a $\Mor$-minimization, because one can always duplicate
  successor states.\footnote{%
    The $\Mor$-minimization of reachable $F$-coalgebras is related to the
    so-called $F$-precise factorizations \cite[Def.~3.1, 3.4]{WissmannDKH19}.}

  For $FX=\Bag X$, all $\Mor$-minimizations exist, but they are not unique up to
  unique isomorphism.
  Consider the tree unravelling $m\colon M\epito X$ in
  \autoref{fig:unravel:sib}. There is an isomorphism $\phi\colon M\to M$ that
  swaps the two successors of the initial state. Hence, $\phi\neq \id_B$, but
  $m\cdot \phi = m\cdot \id_M$, so $M$ is unique up to isomorphism, but not
  unique up to unique isomorphism.
\end{example}
\begin{figure}
  \hfill
  \begin{subfigure}{.5\textwidth}
        \begin{tikzpicture}[coalgebra,baseline=(q0.base)]
          \node[state] (q0) {$\bullet$};
          \node[state] (q1) at (1.3,0.5) {$\bullet$};
          \node[state] (q2) at (1.3,-0.5) {$\bullet$};
          \draw[transition] ([xshift=-3mm]q0.west) to (q0.west);
          \draw[transition] (q0) to node[above left] {}(q1);
          \draw[transition] (q0) to node[below left] {} (q2);
          \node[coalgebra frame,fit={([xshift=-3mm]q0.west) (q1) (q2)}] {};
        \end{tikzpicture}%
        \begin{tikzcd}
          {} \arrow[->>]{r} & {}
        \end{tikzcd}%
        \begin{tikzpicture}[coalgebra,baseline=(q0.base)]
          \node[state] (q0) {$\bullet$};
          \node[state] (q1) at (1.5,0) {$\bullet$};
          \draw[transition] ([xshift=-3mm]q0.west) to (q0.west);
          \draw[transition,bend left=20] (q0) to node[above] {}(q1);
          \draw[transition,bend right=20] (q0) to node[below] {} (q1);
          \node[coalgebra frame,inner ysep=4mm,fit={([xshift=-3mm]q0.west) (q1)}] {};
        \end{tikzpicture}
        \subcaption{Unravelling of siblings}
        \label{fig:unravel:sib}
  \end{subfigure}%
  \hfill
  \begin{subfigure}{.48\textwidth}
    \hfill
        \begin{tikzpicture}[coalgebra,baseline=(q0.base)]
          \node[state] (q0) {$\bullet$};
          \node[state] (q1) at (1,0) {$\bullet$};
          \node[state] (q2) at (2.3,0) {$\cdots$};
          \draw[transition] ([xshift=-3mm]q0.west) to (q0.west);
          \draw[transition] (q0) to node[above] {}(q1);
          \draw[transition] (q1) to node[above] {} (q2);
          \node[coalgebra frame,inner ysep=2mm,fit={([xshift=-3mm]q0.west) (q1) (q2)}] {};
        \end{tikzpicture}%
        \begin{tikzcd}
          {} \arrow[->>]{r} & {}
        \end{tikzcd}%
        \begin{tikzpicture}[coalgebra,baseline=(q0.base)]
          \node[state] (q0) {$\bullet$};
          \draw[transition] ([xshift=-3mm]q0.west) to (q0.west);
          \draw[transition,loop right,out=30,in=-30,looseness=10]
              (q0) to node[right,alias=loopnode] {}(q0);
              \node[coalgebra frame,inner ysep=2mm,fit={([xshift=-3mm]q0.west) (q0) (loopnode)}] {};
        \end{tikzpicture}
        \subcaption{Unravelling of a loop}
  \end{subfigure}
  \caption{Tree unravelling for $FX=\Bag X$}
  \label{fig:unravel}
\end{figure}
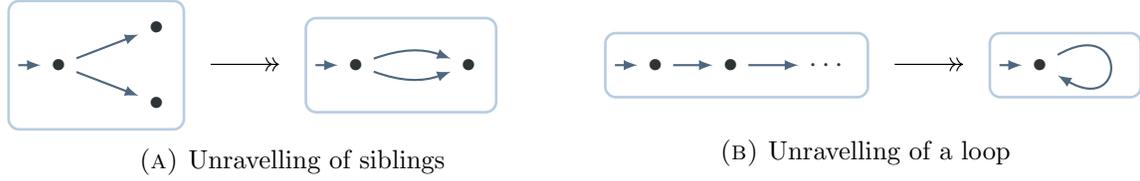

For proving the existence of an $\M$-minimization, we assume that $\M$
is a subclass of the monomorphisms in $\K$. Under this assumption, we first
establish the converse of \autoref{uniqueMinimization}:
\begin{lemma}
  \label{leastIsMinimal}
  If $\M\subseteq \Mono$ and if the least $\M$-subobject $M$ of $X$ exists, then
  $M$ is the $\M$-minimization of $X$.
\end{lemma}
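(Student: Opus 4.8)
The plan is to show that the given least $\M$-subobject $m\colon M\monoto X$ is itself $\M$-minimal; since $m$ already lies in $\M$, this exhibits $m$ as an $\M$-minimization of $X$ and finishes the proof. Thus the whole task reduces to verifying that $M$ is $\M$-minimal, i.e.\ that every $\M$-morphism $h\colon D\monoto M$ is an isomorphism.

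So fix such an $h\in\M$. First I would form the composite $m\cdot h\colon D\to X$. Since $\M$ is closed under composition (\autoref{rem:EM}), this composite again lies in $\M$, so $(D,m\cdot h)$ is an $\M$-subobject of $X$. Now the hypothesis that $M$ is the \emph{least} $\M$-subobject applies: there is a morphism $u\colon M\to D$ with $(m\cdot h)\cdot u=m$.

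The decisive step uses $\M\subseteq\Mono$ in two distinct roles. First, $m$ is monic, so from $m\cdot(h\cdot u)=m=m\cdot\id_M$ we may cancel $m$ to obtain $h\cdot u=\id_M$; hence $h$ is a split epimorphism with right inverse $u$. Second, $h$ itself lies in $\M\subseteq\Mono$, so $h$ is monic, and a monic split epimorphism is an isomorphism: from $h\cdot(u\cdot h)=h=h\cdot\id_D$ and the monicity of $h$ we get $u\cdot h=\id_D$, so $u$ is a genuine two-sided inverse. Therefore $h$ is an isomorphism, $M$ is $\M$-minimal, and $m\colon M\monoto X$ is the desired $\M$-minimization.

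I do not expect a real obstacle here: everything is formal manipulation with the factorization-system axioms, and in fact only closure of $\M$ under composition is used from \autoref{rem:EM}. The single point that needs care is precisely that the assumption $\M\subseteq\Mono$ enters twice — once to cancel the embedding $m$ and so produce a right inverse of $h$, and once to upgrade the resulting split epimorphism $h$ to a full isomorphism — so both invocations should be flagged explicitly to keep the argument honest.
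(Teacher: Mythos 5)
Your proof is correct and is essentially identical to the paper's: both form the composite $m\cdot h\in\M$, invoke leastness to obtain $u$ with $(m\cdot h)\cdot u = m$, cancel the monomorphism $m$ to get $h\cdot u=\id_M$, and then use $h\in\M\subseteq\Mono$ to upgrade the split epimorphism $h$ to an isomorphism. The only difference is that you spell out the final ``monic split epi is iso'' step explicitly, which the paper leaves implicit.
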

\begin{proof}
  Let $m\colon M\monoto X$ be the least $\M$-subobject of $X$, and consider
  $s\colon S\monoto M$ in $\M$. Since $m\cdot s\in \M$, there is some $u\colon
  M\to S$ with $(m\cdot s)\cdot u = m$. Since $m$ is monic, we obtain $s\cdot u
  = \id_M$. Hence, $s$ is a split-epimorphism, and together with
  $s\in\M\subseteq \Mono$, $s$ is an isomorphism.
\end{proof}
\begin{proposition}
  \label{existenceMinimization}
  If $\M\subseteq \Mono$, $\K$ has wide pullbacks of
  $\M$-morphisms, and $\K$ is $\M$-wellpowered, then
  every object $C$ of $\K$ has an $\M$-minimization.
\end{proposition}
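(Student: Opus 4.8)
The plan is to exhibit the \emph{least} $\M$-subobject of $C$ and then invoke \autoref{leastIsMinimal}. Since $\M\subseteq\Mono$, that lemma turns any least $\M$-subobject into the desired $\M$-minimization, so it suffices to construct one. The least $\M$-subobject will be obtained as the wide intersection (wide pullback) of \emph{all} $\M$-subobjects of $C$ at once.

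First I would use $\M$-wellpoweredness: up to isomorphism there is only a \emph{set} of $\M$-subobjects of $C$, so I can pick a set-indexed family of representatives $(s_i\colon S_i\monoto C)_{i\in I}$, one $s_i\in\M$ for each isomorphism class. The key construction is the wide pullback $P$ of this family, which exists by assumption; write $\pr_i\colon P\to S_i$ for its projections and $m\colon P\to C$ for the common composite $m=s_i\cdot \pr_i$ (independent of $i$ by the wide-pullback cone condition).

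Next I would show that $(P,m)$ is itself an $\M$-subobject. By \autoref{EM:pullbackwide}, $\M$ is stable under wide pullbacks: since every $s_i$ lies in $\M$, in particular $s_i\in\M$ for all $i\neq j$, so each projection $\pr_j\colon P\to S_j$ is in $\M$. As $\M$ is closed under composition (\autoref{rem:EM}), $m=s_j\cdot\pr_j\in\M$, hence $(P,m)$ is an $\M$-subobject of $C$.

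Finally I would verify that $(P,m)$ is the \emph{least} $\M$-subobject. Any $\M$-subobject $s\colon S\monoto C$ is isomorphic to some representative $s_i$ via an isomorphism $\phi\colon S\to S_i$ with $s_i\cdot\phi=s$; then $\phi^{-1}\cdot\pr_i\colon P\to S$ witnesses $(P,m)\le(S,s)$, since $s\cdot\phi^{-1}\cdot\pr_i=s_i\cdot\pr_i=m$. Thus $(P,m)$ lies below every $\M$-subobject, and \autoref{leastIsMinimal} identifies it as the $\M$-minimization of $C$. I expect the only delicate points to be two bookkeeping matters: that $\M$-wellpoweredness is genuinely what makes the indexing family small enough for the wide pullback to be taken, and that the canonical map $m$ into $C$ — not merely the projections $\pr_j$ — lands in $\M$; both are handled by \autoref{EM:pullbackwide} together with closure of $\M$ under composition.
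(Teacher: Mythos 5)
Your proof is correct and follows essentially the same route as the paper: $\M$-wellpoweredness yields a set of representative $\M$-subobjects, their wide pullback is shown to be the least $\M$-subobject using \autoref{EM:pullbackwide}, and \autoref{leastIsMinimal} concludes. The only immaterial difference is in how the canonical map into $C$ is placed in $\M$: you compose $s_j\cdot\pr_j$ and invoke closure of $\M$ under composition, whereas the paper observes that $\id_C\in\M$ occurs among the representatives and composes the corresponding projection with an isomorphism.
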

\begin{proof}
  Since $\K$ is $\M$-wellpowered, all the $\M$-carried morphisms $m\colon
  M\monoto C$ form up to isomorphism a set $S$. The wide pullback of all $m\in
  S$ exists in $\K$ by assumption, denote it by $\pr_m\colon P \to M$ for
  $m\colon M\monoto C$. All $m'\in S$ are in $\M$ and so are all $\pr_m$
  by \autoref{EM:pullbackwide}. Since, $\id_C\in \M$,
  there must be some $m\colon M\monoto C$ in $S$ such that $(m,M)$ and $(\id_C,C)$ are isomorphic $\M$-subobjects.
  Hence, $p := \id_m\cdot \pr_{m}\colon P\monoto C$ represents an
  $\M$-subobject, and moreover the least
  $\M$-subobject of $C$, as witnessed by the projections $\pr_m$. By
  \autoref{leastIsMinimal}, $P$ is the minimization of $C$.
\end{proof}
\begin{instance} \label{exExistenceCoalgMinimal}
  This proof directly instantiates to the proofs of the existence of the
  reachable subcoalgebra and simple quotient:
  \begin{listinenv}
    \begin{enumerate}
    \item In the reachability case, let $\M$ be a subclass of the monomorphisms, let
      the base category $\C$ have all (set-indexed) $\M$-intersections, and let $F\colon \C\to
      \C$ preserve all (set-indexed) intersections. Then the reachable part of a given pointed
      coalgebra $(C,c,i_C)$ is obtained as the intersection of all pointed
      subcoalgebras of $(C,c,i_C)$ \cite{amms13}.

      For $\C=\Set$, and $\M$ being the class of injective maps, all
      intersections exist. The condition that $F\colon \Set\to\Set$ preserves
      all intersections is mild: all finitary functors preserve all
      intersections~(\cite[Proof of Lem.~8.8]{amm18} or \cite[Lem.~2.6.10]{Wissmann2020}) and many non-finitary functors do
      as well, e.g.~the powerset functor. An example of a functor that does not
      preserve all intersections is the filter functor~\cite[Sect.~5.3]{Gumm2001}.

    \item For the existence of simple quotients, let $\E$ be a subclass of
      the epimorphisms and let the base category $\C$ be cocomplete and
      $\E$-cowellpowered. Then every $F$-coalgebra (C,c) has a simple quotient
      given by the wide pushout of all quotient coalgebras
      (\cite[Proposition 3.7]{amms13}, and \cite{Gumm08} for the instance $\C=\Set$).

      Every set has only a set of outgoing surjective maps, so all assumptions
      are met for $\C=\Set$, $\E$ containing only surjective maps, and every $\Set$-functor $F$.
    \end{enumerate}
  \end{listinenv}
\end{instance}

\begin{remark}
  All observations on simple quotients also apply to pointed coalgebras:
    An $I$-pointed $F$-coalgebra is simple iff it is
      $\E$-carried-minimal in $\K:=\Coalg_I(F)^\dual$. 
      The forgetful functor
      \[
        \Coalg_I(F) \longrightarrow \Coalg(F)
      \]
      preserves and reflects simple coalgebras and simple quotients (note that
      for every pointed coalgebra $(C,c,i_C)$, the slice categories
      $(C,c,i_C)/\Coalg_I(F)$ and $(C,c)/\Coalg(F)$ are isomorphic). For the sake
      of simplicity, we will not state the results explicitly for simple
      coalgebras in $\Coalg_I(F)$.
\end{remark}

\begin{definition}
  We denote by $J\colon \K_{\min}\hookto \K$ the full subcategory formed by the
  $\M$-minimal objects of $\K$.
\end{definition}

In the existence proof of minimal objects
(\autoref{existenceMinimization}) we only required (wide) pullbacks
where all morphisms in the diagram are in $\M$. We obtain additional properties
if we assume the pullback along $\M$-morphisms, i.e.~pullbacks where only one of
the two morphisms is in $\M$:

\begin{proposition}
  \label{MminCorefl}
  Suppose that pullbacks along $\M$-morphisms exist in $\K$ and that every
  object of $\K$ has an $\M$-minimization. Then $J\colon
  \K_{\min}\hookto \K$ is a coreflective subcategory. Its
  right-adjoint $R\colon \K\to \K_{\min}$ ($J\dashv R$) sends an object to its
  $\M$-minimization; in particular, minimization is functorial.
\end{proposition}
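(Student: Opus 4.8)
The plan is to exhibit, for every object $C$ of $\K$, a couniversal arrow from the inclusion $J$ to $C$, and then invoke the standard characterization of a right adjoint by such arrows. Concretely, I would define $R$ on objects by choosing for each $C$ an $\M$-minimization $m_C\colon RC\monoto C$, which exists by hypothesis; by definition of minimization the carrier $RC$ is $\M$-minimal, so $RC$ is an object of $\K_{\min}$, and $m_C\in\M$. The candidate counit at $C$ is $\epsilon_C := m_C$.

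The heart of the argument is the universal property of $m_C$: for every $\M$-minimal object $D$ and every morphism $f\colon JD = D\to C$, there is a unique $u\colon D\to RC$ with $m_C\cdot u = f$. This is precisely \autoref{MminLeast} applied with $M := D$ (which is $\M$-minimal), $h := f$, and the $\M$-subobject $s := m_C$: since the hypothesis grants pullbacks along $\M$-morphisms, the pullback of $m_C\in\M$ along the arbitrary morphism $f$ exists, and hence \autoref{MminLeast} yields the desired unique factorisation $u$. Thus $m_C$ is a couniversal arrow from $J$ to $C$.

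With such an arrow available at each $C$, the standard theorem relating adjunctions to couniversal arrows furnishes a unique way to extend $R$ to a functor $R\colon \K\to\K_{\min}$ making $J\dashv R$ an adjunction with counit $\epsilon = (m_C)_C$. Explicitly, for a morphism $g\colon C\to C'$ the arrow $Rg\colon RC\to RC'$ is obtained by applying the universal property at $C'$ to the morphism $g\cdot m_C\colon RC\to C'$ (using that $RC\in\K_{\min}$), i.e.~it is the unique morphism with $m_{C'}\cdot Rg = g\cdot m_C$. Functoriality of $R$ and naturality of $\epsilon$ then follow purely from the uniqueness clauses, and since $J$ is a full embedding this exhibits $\K_{\min}$ as a coreflective subcategory with coreflector $R$, which on objects is exactly minimization.

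I expect the only delicate point to be bookkeeping: verifying that the phrase \textqt{pullbacks along $\M$-morphisms} supplies exactly the pullback required by \autoref{MminLeast}, namely the pullback of the $\M$-morphism $m_C$ along an arbitrary $f$, and noting that it is the \emph{uniqueness} part of \autoref{MminLeast} that upgrades a choice of minimal subobject per object into a genuine right adjoint rather than a mere object assignment. Everything beyond this is the routine couniversal-arrow-to-adjoint bookkeeping and requires no further calculation; in particular no properness assumption such as $\M\subseteq\Mono$ is needed.
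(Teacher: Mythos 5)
Your proposal is correct and follows essentially the same route as the paper's proof: both obtain the couniversal arrow from $J$ to $C$ by applying \autoref{MminLeast} to an arbitrary morphism $f\colon D\to C$ from an $\M$-minimal object and the chosen $\M$-minimization $m_C$, whose pullback along $f$ exists by hypothesis, and then invoke the standard couniversal-arrow characterization of adjunctions to upgrade the object assignment to a functor. Your explicit description of $Rg$ and your remark that no properness assumption like $\M\subseteq\Mono$ is needed are accurate and merely spell out details the paper leaves implicit.
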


\begin{proof}
  The universal property of $J$ follows directly from \autoref{MminLeast}: To
  this end, it suffices to consider $R$ as an object assignment. Given a
  morphism $h\colon M\to X$ where $M$ is $\M$-minimal, we need to show that it
  factorizes uniquely through the $\M$-minimization $s\colon S\monoto X$ of
  $D$, that is $RD := S$. Since the pullback of $h$ along $s$ exists by
  assumption, \autoref{MminLeast} yields us the desired unique factorization
  $u\colon M\to S$ with $s\cdot u = h$.
\end{proof}

\begin{samepage}
\begin{instance} \label{exExistsMinimization}
  For both of our main instances, this adjunction has been observed before:
  \begin{listinenv}
    \begin{enumerate}
    \item If $F\colon \C\to \C$ preserves inverse images (w.r.t.~$\M$), then pullbacks along
      $\M$-carried homomorphisms exist in $\Coalg_I(F)$. Hence, the reachable
      $I$-pointed $F$-coalgebras form a coreflective subcategory of
      $\Coalg_I(F)$, where the coreflector maps a pointed coalgebra to its
      reachable part~\cite[Thm~5.23]{wmkd20reachability}

    \item\label{iExistsSimple} The simple coalgebras form a reflective
      subcategory of $\Coalg(F)$, and the reflector sends a coalgebra to its
      simple quotient, under the assumption that the base category has pushouts
      along $\E$-morphisms. For coalgebras in \Set, the adjunction $J\vdash R$ has been shown by
      Gumm~\cite[Theorem 2.3]{Gumm08}.
    \end{enumerate}
  \end{listinenv}
\end{instance}
\end{samepage}

\begin{corollary}\label{MminQuot}
  If pullbacks along $\M$-morphisms exist in $\K$ and all $\M$-minimizations
  exist, then $\M$-minimal objects are closed under $\E$-quotients.
\end{corollary}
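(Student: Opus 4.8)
The plan is to reduce the claim to the two characterizations of minimality already established (\autoref{minimalInE} and \autoref{MminLeast}) together with the uniqueness of the minimization. Let $C$ be $\M$-minimal and let $e\colon C\epito Q$ be an $\E$-quotient; I want to show that $Q$ is $\M$-minimal as well. Since all $\M$-minimizations exist by assumption, let $m\colon M\monoto Q$ be the $\M$-minimization of $Q$, so that $M$ is $\M$-minimal and $m\in\M$. My goal then becomes to prove that this particular $m$ is an isomorphism: once $m$ is an isomorphism, $Q\cong M$ is $\M$-minimal, because minimality is invariant under isomorphism (both $\E$ and $\M$ being closed under composition with isomorphisms).

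First I would let $e$ interact with the minimization. The pullback of the $\M$-morphism $m$ along $e$ exists by the hypothesis that pullbacks along $\M$-morphisms exist, so \autoref{MminLeast}, applied to the morphism $e\colon C\to Q$ out of the $\M$-minimal object $C$ and to the $\M$-subobject $m\colon M\monoto Q$, yields a unique factorization $g\colon C\to M$ with $m\cdot g = e$.

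The key observation is then that the codomain $M$ of $g$ is itself $\M$-minimal, so \autoref{minimalInE} applies to the morphism $g\colon C\to M$ and forces $g\in\E$. This is the step that genuinely uses that we map into the \emph{minimization} rather than into an arbitrary $\M$-subobject of $Q$: had I pulled back an arbitrary $m'\colon D\monoto Q$ instead, the analogous $g$ would land in the non-minimal $D$ and would not be forced into $\E$. Now $e = m\cdot g$ exhibits $e$ as a composite of $m\in\M$ and $g\in\E$; since both $e\in\E$ and $g\in\E$, the dual of \itemref{rem:EM}{rem:EM:3} (cancelling the right-hand factor $g\in\E$ from the composite $e\in\E$) shows $m\in\E$. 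Hence $m\in\E\cap\M$, which is the class of isomorphisms by \itemref{rem:EM}{rem:EM:iso}, so $m$ is an isomorphism and $Q$ is $\M$-minimal.

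The only subtlety I anticipate is getting the cancellation direction right: I need the right-hand factor $g$ (not the left-hand $m$) to lie in $\E$ in order to conclude that the remaining factor of an $\E$-morphism is again in $\E$, and this is exactly the dual of the stated $\M$-cancellation, valid in any factorization system. Equivalently, I can avoid invoking the dual law by hand: $e = m\cdot g$ with $g\in\E$ and $m\in\M$ is an $(\E,\M)$-factorization of $e$, while $e\in\E$ means $e=\id_Q\cdot e$ is another one, so the essential uniqueness of factorizations forces the $\M$-part $m$ to be isomorphic to $\id_Q$, i.e.\ $m$ is an isomorphism. Notably, no properness of the factorization system is needed at any point.
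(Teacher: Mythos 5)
Your proof is correct and essentially identical to the paper's: both factor $e$ through the $\M$-minimization of its codomain, use \autoref{minimalInE} to force the factor into the minimal object into $\E$, and then cancel via the dual of \itemref{rem:EM}{rem:EM:3} to conclude $m\in\E\cap\M$ is an isomorphism. The only cosmetic difference is that you obtain the factorization directly from \autoref{MminLeast}, while the paper takes the adjoint transpose via the coreflection of \autoref{MminCorefl} -- which is itself just \autoref{MminLeast} repackaged -- so the underlying argument is the same.
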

\hackynewpage
\begin{proof}
  Consider an $\E$-morphism $e\colon C\epito D$ where $C$ is $\M$-minimal. Take
  the adjoint transpose $f\colon C\to RD$ with $m\cdot f = e$ where $m\colon
  RD\monoto D$ is the $\M$-minimization of $D$:
  \[
    \begin{tikzcd}
      C
      \arrow[->>]{dr}{e}
      \arrow{d}[swap]{f}
      \\
      RD
      \arrow[>->]{r}{m}
      & D
    \end{tikzcd}
  \]
  Since $RD$ is $\M$-minimal, $f$ is in $\E$ (\autoref{minimalInE}).
  Moreover, $f\in \E$ and $m\cdot f\in \E$ imply $m\in \E$
  (\itemref{rem:EM}{rem:EM:3}), hence $m\in \E\cap\M$ is an isomorphism and
  $D$ is $\M$-minimal.
\end{proof}

\begin{remark}
  The closure of $\M$-minimal objects under quotients also holds under slightly
  different assumptions. For example, closure can be shown
  \begin{enumerate}
  \item if pullbacks along
  $\M$-morphisms exist in $\K$ and $\M$ is a class of monomorphisms,
  \item or if $\E$-morphisms are closed under pullbacks and all those exist in $\K$.
  \end{enumerate}
\end{remark}

In the example of the factorization of a DFA-morphism
(\autoref{figExDfaHomFactor} on p.~\pageref{figExDfaHomFactor}), every state in
$(C,c)$ was reachable from $s$, and hence, every state in the quotient
$e\colon (C,c)\epito (\Im(h),i)$ is reachable from $e(s)$.

\begin{example}
  \label{exQuotClosed}
  \begin{listinenv}
    \begin{enumerate}
    \item\label{exQuotClosed:Reach} If $F$ preserves inverse images, then reachable $F$-coalgebras are
      closed under quotients \cite[Cor.~5.24]{wmkd20reachability}. Note that if
      $F$ does not preserve inverse images, then a quotient of a reachable
      $F$-coalgebra may not be reachable. For example, in (pointed) coalgebras
      for the monoid-valued functor $(\R,+,0)^{(-)}$ there is the coalgebra
      quotient with $h(b_1)=h(b_2) = b$:
      \begin{center}
        \begin{tikzpicture}[coalgebra,baseline=(q0.base),add state borders]
          \node[state] (q0) {$a$};
          \node[state] (q1) at (2,-0.5) {$b_1$};
          \node[state] (q2) at (2,0.5) {$b_2$};
          \draw[transition] ([xshift=-3mm]q0.west) to (q0.west);
          \draw[transition] (q0) to node[below left] {3}(q1);
          \draw[transition] (q0) to node[above left] {-3} (q2);
          \node[coalgebra frame,fit={([xshift=-3mm]q0.west) (q0) (q1) (q2)}] {};
        \end{tikzpicture}
        \quad
        \begin{tikzcd}
          {} \arrow[->>]{r}{h}
          & {}
        \end{tikzcd}
        \quad
        \begin{tikzpicture}[coalgebra,baseline=(q0.base),add state borders]
          \node[state] (q0) {$a$};
          \node[state] (q1) at (1,0) {$b$};
          \draw[transition] ([xshift=-3mm]q0.west) to (q0.west);
          \node[coalgebra frame,fit={([xshift=-3mm]q0.west) (q0) (q1)}] {};
        \end{tikzpicture}
      \end{center}
      Since transition weights may cancel out each other ($-3 + 3 = 0$), the
      codomain of $h$ is not reachable even though its domain is.
    \item If the base category $\C$ has pushouts along $\E$-morphisms, then
      simple $F$-coalgebras are closed under subcoalgebras.
      For $\C=\Set$, this is obvious: if in a coalgebra $(C,c)$, all states are
      of pairwise different behaviour, then so they are in every subcoalgebra of
      $(C,c)$.
    \end{enumerate}
  \end{listinenv}%
\end{example}

\takeout{}

\section{Interplay of minimality notions}
\label{secInterplay}

The two main aspects of minimization we have seen -- reachability and minimization
for observability -- are closely connected on an abstract level 
and also interact well as we see in the following. In order to minimize a
pointed coalgebra under both aspects, we have two options: first construct the
reachable part and then the simple quotient, or we first form the simple
quotient and then construct its reachable part. Given the existence of pullbacks
of $\M$-morphisms along arbitrary morphisms,
we show in \autoref{propWellPointed} that any order is fine.

In the abstract setting of a category $\K$ with an $(\E,\M)$-factorization
system we are transforming an object $C\in \K$ into an object $C'$ that is
$\M$-minimal in $\K$ and $\E$-minimal in~$\K^\dual$.
\begin{proposition} \label{propWellPointed}
  Suppose $\K$ has an $(\E,\M)$-factorization system such that
  all $\M$-mi\-nim\-iza\-tions in $\K$ and all $\E$-minimizations in $\K^\dual$ exist.
  If $\K$ has pullbacks along $\M$-morphisms and pushouts along $\E$-morphisms,
  then for every $C$ in $\K$ the following two constructions yield the same object:
  \begin{enumerate}
  \item The $\M$-minimization of $C$ in $\K$ followed by its $\E$-minimization
    in $\K^\dual$.
  \item\label{simpleFirst} The $\E$-minimization of $C$ in $\K^\dual$ followed by its $\M$-minimization
    in $\K$.
  \end{enumerate}
\end{proposition}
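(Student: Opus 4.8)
The plan is to realise both constructions as the $(\E,\M)$-factorization of a single canonical morphism, namely the composite that sends the reachable part into the simple quotient. Write $m\colon M\monoto C$ for the $\M$-minimization of $C$ and $e\colon M\epito N$ for the $\E$-minimization of $M$ in $\K^\dual$, so that construction~(1) produces $N$. Dually, write $q\colon C\epito Q$ for the $\E$-minimization of $C$ in $\K^\dual$ and $p\colon P\monoto Q$ for the $\M$-minimization of $Q$, so construction~\ref{simpleFirst} produces $P$. I would then factor the composite $q\cdot m\colon M\to Q$ as $M\xrightarrow{\bar e} W\xrightarrow{\bar m} Q$ with $\bar e\in\E$ and $\bar m\in\M$, and show $W\cong N$ and $W\cong P$, whence the two results agree up to isomorphism.

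First I would check that the intermediate object $W$ lies in both minimality classes. Since $M$ is $\M$-minimal and $\bar e\colon M\epito W$ exhibits $W$ as an $\E$-quotient of $M$, \autoref{MminQuot} applies (its hypotheses, pullbacks along $\M$-morphisms and existence of all $\M$-minimizations, are exactly those assumed here) and gives that $W$ is again $\M$-minimal. Dually, since $Q$ is $\E$-minimal and $\bar m\colon W\monoto Q$ exhibits $W$ as an $\M$-subobject of $Q$, the dual of \autoref{MminQuot} (using the assumed pushouts along $\E$-morphisms and existence of $\E$-minimizations) shows that $W$ is $\E$-minimal.

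To identify $W$ with $N$ I would invoke the concrete universal property recorded after \autoref{uniqueMinimization}, dualised: for every $\E$-quotient of $M$ there is a unique comparison morphism into the $\E$-minimization $N$, and it is necessarily in $\E$. Applied to $\bar e$ this yields $u\colon W\to N$ in $\E$ with $u\cdot\bar e=e$; as $W$ is $\E$-minimal, the $\E$-morphism $u$ out of $W$ is an isomorphism. Symmetrically, the concrete form of \autoref{uniqueMinimization} applied to the $\M$-subobject $\bar m\colon W\monoto Q$ yields a unique $u'\colon P\to W$ in $\M$ with $\bar m\cdot u'=p$; since $W$ is $\M$-minimal, the $\M$-morphism $u'$ into $W$ is an isomorphism. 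Combining the two isomorphisms gives $N\cong W\cong P$, as claimed. (The existence and uniqueness statements I rely on are available because pullbacks along $\M$-morphisms supply $\M$-intersections and pushouts along $\E$-morphisms supply $\E$-cointersections.)

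I expect the principal obstacle to be bookkeeping rather than any genuinely hard step: one must keep the variance of every comparison morphism straight when dualising \autoref{MminQuot} and \autoref{uniqueMinimization}, so that the morphism $u$ emerges \emph{out of} $W$ (for the $\E$-side, where $\E$-minimality is used) while $u'$ emerges \emph{into} $W$ (for the $\M$-side, where $\M$-minimality is used), and one must confirm that the symmetric hypotheses of the proposition provide exactly the closure result and the uniqueness result needed on each side. Once the directions line up, the argument is a clean diagram chase with no separate calculation required.
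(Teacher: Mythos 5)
Your proof is correct, and it takes a genuinely different route from the paper's, although both arguments begin with the same canonical composite from the $\M$-minimization of $C$ into its $\E$-minimization (your $q\cdot m\colon M\to Q$, the paper's $e\cdot m\colon R\to Q$). You factor this composite through the $(\E,\M)$-factorization system, show the intermediate object $W$ is both $\M$-minimal (\autoref{MminQuot}) and $\E$-minimal (its dual), and then compare $W$ with each construction's result via the least-subobject/greatest-quotient property of \autoref{uniqueMinimization}; this is legitimate, since the single pushout of $\bar e$ along $e$ and the single pullback of $\bar m$ along $p$ that the underlying \autoref{MminLeast} needs exist by the proposition's hypotheses, and the comparison morphisms land in $\E$ resp.\ $\M$ by \itemref{rem:EM}{rem:EM:3} and its dual, so minimality of $W$ forces both to be isomorphisms. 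The paper never factors the composite: it uses the coreflection of \autoref{MminCorefl} to obtain a morphism $h\colon R\to W$ directly into the result $W$ of construction~(2), applies the dual of \autoref{MminQuot} once to see that $W$ is $\E$-minimal, reflects $h$ through the $\E$-minimization $V$ of $R$ to get a single comparison $\phi\colon V\to W$ between the two results, and concludes $\phi\in\E\cap\M$ by \autoref{minimalInE} and its dual. So the paper constructs one direct comparison using the adjunctions, whereas you construct a third, manifestly symmetric object and two comparisons; your version costs an extra application of \autoref{MminQuot} and of the universal property, but buys a by-product the paper does not record: the doubly minimal object is exactly the $(\E,\M)$-image of the canonical morphism from the $\M$-minimization to the $\E$-minimization (in the coalgebraic instance, the well-pointed modification is the image of the map from the reachable part into the simple quotient), and commutativity of the resulting square falls out of $u\cdot\bar e=e$ and $\bar m\cdot u'=p$.
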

\begin{proof}
  In the first approach,
  denote the $\M$-minimization of $C$ by $m\colon R\monoto C$ and its
  $\E$-minimization by $s\colon R \epito V$. In the other approach, denote the
  $\E$-minimization of $C$ by $e\colon C\epito Q$ and its $\M$-minimization by
  $t\colon W\monoto Q$:
  \[
    \begin{tikzcd}
      C
      \arrow[->>]{rr}{e}
      &
      &
      Q\\
      R
      \arrow[>->]{u}{m}
      \arrow[->>]{r}{s}
      & V
      & W
      \arrow[>->]{u}{t}
    \end{tikzcd}
  \]
  We need to prove that $V$ and $W$ are isomorphic, making the above
  (then-closed) square commute. The $\M$-minimal objects form a coreflective
  subcategory (\autoref{MminCorefl}), so $e\cdot m$, whose domain is
  $\M$-minimal, factorizes through the $\M$-minimization of the codomain of
  $e\cdot m$, i.e.~we have $h\colon R\to W$ with $t\cdot h = e\cdot m$. Since
  $Q$ is $\E$-minimal, its $\M$-subobject $W$ is also $\E$-minimal in $\K^\dual$ (\autoref{MminQuot}). The
  $\E$-minimal objects form a reflective subcategory (\autoref{MminCorefl}).
  Applying the reflection to $h\colon R\to W$, we obtain $\phi\colon V\to W$ with $h =
  \phi\cdot s$. Since $V$ is $\E$-minimal (in $\K^\dual$), $\phi$ is in $\M$,
  and since $W$ is $\M$-minimal, $\phi$ is in $\E$, and thus $\phi$ is an isomorphism.
\end{proof}

\begin{remark}
  Unfortunately, it seems very unlikely that the object obtained under both
  aspects in \autoref{propWellPointed} can be described by a universal property in $\K$.
  Given an object $C$ in $\K$, let $C'$ be the object obtained from
  \autoref{propWellPointed}. Then in general, there is neither a morphism $C\to
  C'$ nor $C'\to C$ in $\K$. This will become clear when considering an example
  coalgebra and its minimization under both aspects (\autoref{exTSwell}).
\end{remark}

In the concrete case of $F$-coalgebras, a pointed coalgebra that is both
simple and reachable is called a \emph{well-pointed} coalgebra (see
\cite[Section 3.2]{amms13}). The minimization of a pointed coalgebra under both aspects
is called the \emph{well-pointed modification}~\cite{amms13}: it is obtained by
first forming the simple quotient and then taking its reachable subcoalgebra
(i.e.~item \ref{simpleFirst} in \autoref{propWellPointed}).

\begin{instance}
  If $F\colon \C\to \C$ fulfils all assumptions from the previous
  \autoref{exExistsMinimization} (and in particular preserves inverse images),
  then the construction of the simple quotient and the reachability construction
  for $F$-coalgebras can be performed in any order, yielding the same well-pointed
  coalgebra. 
\end{instance}

In sets, the reachability computation is a simple breadth-first
search~\cite{wmkd20reachability}, and hence runs in linear time. On the other
hand, existing algorithms for computing the simple quotient for many $\Set$-functors run
in at least $n\cdot \log n$ time where $n$ is the size of the
coalgebra~\cite{GrooteEA21,WissmannDMS19}. Hence, the reachability analysis
should be done first whenever possible.

\begin{example}\label{exTSwell}
  The powerset functor $\Pow\colon\Set\to\Set$ preserves inverse images and
  arbitrary intersections, so minimization of transition systems under
  reachability and bisimilarity can be done in any order.
  \autoref{figTSWell} shows an example of a pointed transition system $C$, whose
  well-pointed modification can be obtained by performing the minimization
  aspects in any order, both yielding the one-state transition system $M$.
  Note that there is no coalgebra homomorphism between $C$ and $M$ (in neither
  direction, as indicated by $\not\to$). This indicates that the well-pointed modification of a coalgebra
  $C$ can not be described by a universal property in $\Coalg(\Pow)$.
  \begin{figure} \centering
    \begin{tikzpicture}[coalgfit/.style={
        coalgebra frame,
        inner xsep=5mm,
        inner ysep=4mm,
        xshift=-1.5mm,
        yshift=1.5mm,
      },
      ]
    \def\cellx{7cm}
    \def\celly{2.5cm}
    \begin{scope}[coalgebra,shift={(0,\celly)}]
      \node[state] (q0) {$\bullet$};
      \node[state] (q1) at (1,0) {$\bullet$};
      \node[state] (q2) at (2,0) {$\bullet$};
      \node[state] (q3) at (3,0) {$\bullet$};
      \draw[transition] ([xshift=-3mm]q0.west) to (q0.west);
      \draw[transition,bend left] (q0) to node[above] {}(q1);
      \draw[transition,bend left] (q1) to node[above] {}(q0);
      \draw[transition] (q2) to node[above] {}(q1);
      \draw[transition] (q2) to node[above] {}(q3);
      \node[coalgfit,fit={(q0) (q3)}] (C) {};
      \node[coalgname,anchor=west] at (C.north west) {$C$};
    \end{scope}
    \begin{scope}[coalgebra,shift={(\cellx,\celly)}]
      \node[state] (q0) {$\bullet$};
      \node[state] (q2) at (1,0) {$\bullet$};
      \node[state] (q3) at (2,0) {$\bullet$};
      \draw[transition] (q2) to node[above] {}(q3);
      \draw[transition] ([xshift=-3mm]q0.west) to (q0.west);
      \draw[transition,loop right,out=50,in=110,looseness=5]
      (q0) to node[right] {}(q0);
      \draw[transition] (q2) to node[above] {}(q0);
      \node[coalgfit,fit={(q0) (q2) (q3)}] (Q) {};
      \node[coalgname,anchor=west] at (Q.north west) {$Q$};
    \end{scope}
    \begin{scope}[coalgebra,shift={(0,0)}]
      \node[state] (q0) {$\bullet$};
      \node[state] (q1) at (1,0) {$\bullet$};
      \draw[transition] ([xshift=-3mm]q0.west) to (q0.west);
      \draw[transition,bend left] (q0) to node[above] {}(q1);
      \draw[transition,bend left] (q1) to node[above] {}(q0);
      \node[coalgfit,fit=(q0) (q1)] (R) {};
      \node[coalgname,anchor=west] at (R.north west) {$R$};
    \end{scope}
    \begin{scope}[coalgebra,shift={(\cellx,0)}]
      \node[state] (q0) {$\bullet$};
      \draw[transition] ([xshift=-3mm]q0.west) to (q0.west);
      \draw[transition,loop right,out=50,in=110,looseness=5]
      (q0) to node[right] {}(q0);
      \node[coalgfit,fit=(q0),label distance=5pt] (M) {};
      \node[coalgname,anchor=west] at (M.north west) {$M$};
    \end{scope}
    \begin{scope}[path/.style={
        commutative diagrams/.cd, every arrow, every label,
        shorten <= 2mm,shorten >= 2mm,
      }]
    \path[path]
    (C) edge[->>] (Q)
    (R) edge[->>] (M)
    (R) edge[>->] (C.south -| R)
    ;
    \path[path,xshift=2mm,>->] (M.north) to[>->] (M |- Q.south);
    \path[path,xshift=1mm,yshift=1mm] (C.south east) to node[anchor=center,rotate=-35,pos=0.45]{$\not\,$} (M.north west);
    \path[path,xshift=-1mm,yshift=-1mm] (M.north west) to node[anchor=center,rotate=-35,pos=0.45]{$\not\,$} (C.south east);
    \end{scope}
  \end{tikzpicture}
  \caption{Minimization of a $\Pow$-coalgebra under reachability and
    observability (i.e.~bisimilarity), with no morphisms between $C$ and $M$ ($\not\to$)}
  \label{figTSWell}
  \end{figure}
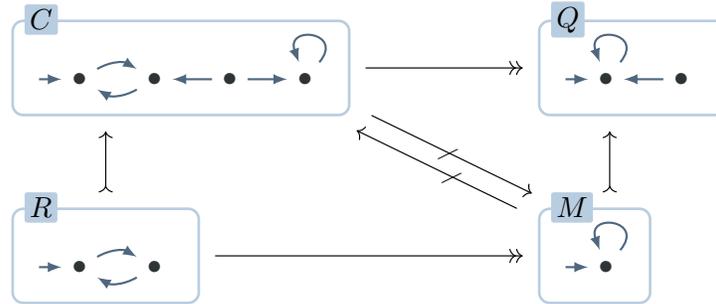

\end{example}

If $F$ does not preserve inverse images, then in the construction of the simple
quotient, transitions may cancel out each other and this may affect the
reachability of states. We have seen an example for this in
\itemref{exQuotClosed}{exQuotClosed:Reach} where performing reachability first
and observability second leads to a simple coalgebra in which states are unreachable, i.e.~the result is not well-pointed.
Hence, in
contrast to the well-known automata minimization procedure, the minimization of
a coalgebra in general has to be performed by first computing its simple quotient and secondly
computing the reachable part in the simple quotient.

\section{Conclusions}
We have seen a common ground for minimality notions in a category with various
instances in a coalgebraic setting. The abstract
results about the uniqueness and the existence of the minimization instantiate
to the standard results for reachability and observability of coalgebras.
Most of the general results even hold if the $(\E,\M)$-factorization system is
not proper. The tree unravelling of an automaton is an instance of
minimization for a non-proper factorization system.

It remains for future work to relate the efficient algorithmic approaches to the
minimization tasks: reachability is computed by breadth-first
search~\cite{wmkd20reachability,BarloccoKR19} and observability is computed by
partition refinement algorithms~\cite{KonigK14,WissmannDMS19,DorschMSW17}.
Even though their run-time complexity differs -- reachability is usually linear,
whereas partition refinement algorithms are quasilinear or slower -- they have
striking similarities.
All these algorithms compute a chain of subobjects resp.~quotients on the carrier of
the input coalgebra and terminate at the first element of the chain admitting a
coalgebra structure compatible with the input coalgebra. It is thus likely that
this relation can be made formal. A similar connection between the reachability of
algebras and partition refinement on coalgebras is already known~\cite{Rot16}.

\subsection*{Acknowledgements}%
The author thanks Stefan Milius and Jurriaan
Rot for inspiring discussions and thanks the referees for their helpful comments.
The author thanks Bálint Kocsis for finding a mistake in \autoref{figTSWell},
which is corrected in the present version.

\bibliographystyle{alphaurl}
\bibliography{refs}

\newpage
\appendix
\section{Proofs of standard results}
\begin{proofappendix}{cor:coalg-colims}
  Let $c_i\colon UDi\to FUDi$ be the coalgebra structure of $Di\in \Coalg(F)$
  for every $i\in \D$.
  Consider the colimit of $UD\colon \D\to \C$
  \[
    \begin{tikzcd}
      UDi
      \arrow{r}{\inj_i}
      & \colim (UD)
    \end{tikzcd}
    \qquad\text{for every }i\in \D
  \]
  and apply $F$ to it. Precomposition with $c_i$ yields
  \[
    \begin{tikzcd}
      UDi
      \arrow{r}{c_i}
      &
      FUDi
      \arrow{r}{F\inj_i}
      & F\colim (UD)
    \end{tikzcd}
    \qquad\text{for every }i\in \D.
  \]
  This is a cocone for the diagram $UD$ because for all $h\colon i\to j$ in $\D$
  the outside of the following diagram commutes:
  \[
    \begin{tikzcd}
      UDi
      \arrow{r}{c_i}
      \descto{dr}{$Dh$ coalgebra \\ morphism}
      \arrow{d}[swap]{UDh}
      &[8mm]
      FUDi
      \arrow{rd}{F\inj_i}
      \arrow{d}{FUDh}
      \\
      UDj
      \arrow{r}[swap]{c_j}
      &
      FUDj
      \arrow{r}[swap]{F\inj_j}
      & F\colim (UD)
    \end{tikzcd}
  \]
  Thus we obtain a coalgebra structure $u\colon \colim(UD)\to F\colim(UD)$.
  Since $u$ is a cocone-morphism, every $\inj_i$ is an $F$-coalgebra morphism.

  For any other coalgebra structure $u'\colon \colim(UD)\to F\colim(UD)$ for
  which every $\inj_i$ is an $F$-coalgebra morphism, we have $u=u'$ by the
  colimit $\colim(UD)$. Hence, $u$ is the only coalgebra structure that making
  all $\inj_i$ coalgebra morphisms.

  In order to show that $(\colim(UD), u)$ is the colimit of $D\colon \D\to
  \Coalg(F)$, consider another cocone $(m_i\colon Di\to (E,e))_{i\in \D}$.
  \[
    \begin{tikzcd}
      \colim (UD)
      \arrow{d}[swap]{u}
      \arrow[dashed]{r}{w}
      & E \arrow{d}{e} \\
      F\colim(UD)
      \arrow{r}{Fw}
      & FE
    \end{tikzcd}
  \]
  In $\C$, we obtain a cocone morphism $w\colon \colim (UD)\to E$. With a
  similar verification as before,  $(e\cdot m_i\colon UDi\to FE)_{i\in \D}$ is a
  cocone for $UD$, and thus both $e\cdot w$ and $Fw\cdot u\colon \colim (UD)\to
  FE$ are cocone morphisms (for $UD$). Since $\colim (UD)$ is the colimit, this
  implies that $e\cdot w = Fw\cdot u$, i.e.~$w\colon (\colim (UD), u)\to (E,e)$ is
  a coalgebra morphism. Since $U\colon \Coalg(F)\to \C$ is faithful, $w$ is the
  unique cocone morphism, and so $(\colim UD,u)$ is indeed the colimit of~$D$.
  \qed
\end{proofappendix}

\begin{proofappendix}{EM:pullbackwide}
    Consider the $(\E,\M)$-factorization of $\pr_j$ into $e\colon
    P\epito C$ and $m\colon C\monoto A_j$ with $\pr_j = m\cdot e$. On the image,
    we define a cone structure $(c_i\colon C\to A_i)_{i\in I}$ by $c_j = m$ and
    for every $i\in I\setminus\{j\}$ by the diagonal fill-in:
    \[
      \begin{tikzcd}
        P
        \arrow[->>]{r}{e}
        \arrow{d}[swap]{\pr_i}
        & C
        \arrow[>->]{r}{c_j}
        \arrow[dashed]{dl}{c_i}
        & A_j
        \arrow{d}{f_j}
        \\
        A_i
        \arrow[>->]{rr}{f_i}
        & & B
      \end{tikzcd}
      \quad\text{for all }i\in I\setminus\{j\}.
    \]
    The diagonal $c_i$ is induced, because $f_i\in \M$ for all $i\in I\setminus\{j\}$.
    The family $(c_i)_{i\in I}$ forms a cone for the wide pullback, because for
    all $i,i'\in I$ we have $f_i\cdot c_i = f_j\cdot c_j = f_{i'}\cdot c_{i'}$.
    This makes $e$ a cone morphism, because $c_i\cdot e = \pr_i$ for
    all $i\in I$.
    Moreover, the limiting cone $P$ induces a cone morphism $s\colon C\to P$ and
    we have $s\cdot e = \id_P$. Consider the commutative diagrams:
    \[
      \begin{tikzcd}
        P
        \arrow[->>]{rr}{e}
        \arrow{dd}[swap]{e}
        \arrow{dr}[description,shape=circle,inner sep=1pt]{\id_P}
        & & C
        \arrow{dl}[description,shape=circle,inner sep=2pt]{s}
        \arrow{dd}{c_j}
        \\
        & P
        \arrow{dr}[description,shape=circle,inner sep=2pt]{\pr_j}
        \arrow{dl}[description,shape=circle,inner sep=2pt]{e}
        \descto{r}{\((*)\)}
        \descto{d}{\((*)\)}
        & {}
        \\
        C
        \arrow[>->]{rr}[swap]{c_j}
        & {} & A_j
      \end{tikzcd}
      \quad\text{ and }\quad
      \begin{tikzcd}
        P
        \arrow[->>]{rr}{e}
        \arrow{dd}[swap]{e}
        & & C
        \arrow{dd}{c_j}
        \arrow{ddll}[description,shape=circle,inner sep=1pt]{\id_C}
        \\
        & \phantom{P}
        \\
        C
        \arrow[>->]{rr}{c_j}
        & {} & A_j.
      \end{tikzcd}
    \]
    The parts marked by $(*)$ commute because $e$ and $s$ are cone morphisms.
    Since the diagonal fill-in in \itemref{D:factSystem}{diagonalization} is unique, we have $e\cdot s = \id_C$. Thus,
    $e$ is an isomorphism, and $\pr_j = c_j\cdot e$ is in $\M$, as desired.
    \qed
\end{proofappendix}

\end{document}